\documentclass{article}
\usepackage[a4paper, left=2cm, right=2cm]{geometry}
\usepackage{graphicx} % Required for inserting images
\usepackage{amsmath}
\usepackage{amsfonts}
\usepackage{amssymb}
\usepackage{xcolor}
\usepackage{algorithm}
\usepackage{algpseudocode}
\usepackage{stmaryrd}

\newtheorem{theorem}{Theorem}
\newtheorem{proposition}{Proposition}
\newtheorem{definition}{Definition}

\newtheorem{lemma}{Lemma}
\newtheorem{remark}{Remark}
\newtheorem{assumption}{Assumption}

\newcommand{\interval}[1]{\left[#1\right]}
\newcommand\intsimple[1]{\left\llbracket #1 \right\rrbracket}
\newcommand{\myvec}[2]{%
\left[
\begin{array}{c}
#1\\
#2
\end{array}
\right]
}
\newcommand{\ub}[1]{\bar{#1}}
\newcommand{\lb}[1]{\underline{#1}}
\newcommand{\II}{\mathcal{I}}
\newcommand{\MM}{\mathcal{M}}
\newcommand{\RR}{\mathbb{R}}
\newcommand{\TT}{\mathbb{T}}
\newcommand{\XX}{\mathcal{X}}
\newcommand{\UU}{\mathcal{U}}
\renewcommand{\SS}{\mathcal{S}}

\newcommand{\BB}{\mathcal{B}}
\newcommand{\Ac}{\hat{A}}
\newcommand{\Bc}{\hat{B}}
\newcommand{\AKc}{\hat{A}_K}
\newcommand{\AK}{\mathcal{A}_K}
\newcommand{\MD}{\II_S}
\newcommand{\MDK}{\II_{A_K}}
\newcommand{\Mdelta}{\II_{\Delta}}
\newcommand{\bbox}[1]{\mathbb{B}\left(#1\right)}
\newcommand{\qed}{\hfill$\square$}
\newcommand{\EE}{\mathcal{E}}
\title{Robust Model Predictive Control for Linear Systems\\ with Interval Matrix Model Uncertainty}
\author{Renato Quartullo\thanks{Uninettuno University, Rome, Italy. Email: renato.quartullo@uninettunouniversity.net}, Andrea Garulli\thanks{University of Siena, Siena, Italy. Email: andrea.garulli@unisi.it}, Mirko Leomanni\thanks{Mercatorum University, Rome, Italy. Email: mirko.leomanni@unimercatorum.it}}
\date{}

\begin{document}

\maketitle
\begin{abstract}
    This paper proposes a novel robust Model Predictive Control (MPC) scheme for linear discrete-time systems affected by model uncertainty described by interval matrices.
    The key feature of the proposed method is a bound on the uncertainty propagation along the prediction horizon which exploits a set-theoretic over-approximation of each term of the uncertain system impulse response.
    Such an approximation is based on matrix zonotopes and leverages the interval matrix structure of the uncertainty model.
    Its main advantage is that all the relevant bounds are computed offline, thus making the online computational load independent of the number of uncertain parameters.
    A variable-horizon MPC formulation is adopted to guarantee recursive feasibility and to ensure robust asymptotic stability of the closed-loop system.
    Numerical simulations demonstrate that the proposed approach is able to match the feasibility regions of the most effective state-of-the-art methods, while significantly reducing the computational burden, thereby enabling the treatment of nontrivial dimensional systems with multiple uncertain parameters.
\end{abstract}

\section{Introduction}

Model Predictive Control (MPC) is widely recognized as an effective technique to address multivariable control problems with state and input constraints \cite{rawlings2017model,borrelli2017predictive,CANNONbook}.
Robust MPC is an active research area that deals with the application of MPC to systems affected by uncertainty. 
The main challenges in this context is to provide theoretical guarantees on recursive feasibility,  robust stability and constraint satisfaction of the MPC scheme, in the presence of disturbances and unmodeled dynamics.

The case of bounded additive disturbances has been investigated in great detail. Minimizing the worst-case performance over all possible disturbance sequences requires the solution of min-max control problems \cite{bemporad2003min,buerger2014active}, whose computational complexity typically increases exponentially with the dimension of the state and input vectors. A more efficient, yet more conservative, approach is the so-called tube-MPC (see, e.g.,  \cite{lee1999constrained,chisci2002feasibility,mayne2005robust,rakovic2012homothetic}), in which robust constraint satisfaction is achieved by suitably tightening the nominal constraints to cope with the effect of disturbances on the predicted nominal trajectory. 

Parametric uncertainty in the dynamic model is usually more challenging to deal with, mainly because uncertainty propagation along the prediction horizon depends on the system states and inputs. Early works consider systems with uncertain impulse responses \cite{campo1987robust}. A commonly used uncertainty description assumes that the system matrices belong to a polytopic set, thus leading to min-max optimization problems aiming at minimizing the worst-case performance with respect to all admissible models. Notably, in \cite{bemporad2003min} it is shown that the MPC control law can be computed explicitly offline by using multiparametric programming, although the complexity of the solution rapidly grows even for systems of small dimension. Hence, the research has focused on finding a suitable compromise between conservatism and computational complexity. Methods based on ellipsoidal invariant sets and linear matrix inequalities have been proposed \cite{kothare1996robust,kouvaritakis2002efficient,munoz2015robust}. Conservatism reduction has been pursued by employing polytopic tubes and invariant sets, see, e.g.,  {\cite{lee2000robust,langson2004robust,fleming2014robust}. A detailed review of these approaches can be found in \cite[Chapter 5]{CANNONbook}. 
%Their main limitation is that the shape of the tube must be set a priori and does not depend on the way uncertainty propagates along the prediction horizon. 

More recently, researchers have turned their attention towards methods that employ linear time-varying state feedback controllers. These approaches are potentially advantageous for reducing conservativeness with respect to tube-based approaches, but make it more difficult to guarantee recursive feasibility of the MPC scheme. 
For example, in \cite{BORRELLI22} uncertainty propagation is expressed in terms of norm bounds on the perturbation of the nominal predicted trajectory, that can be computed offline but must hold for any admissible input sequence. In \cite{chen2024robust}, a system level synthesis approach is adopted to derive an uncertainty bound which depends on the controller parameters. Both methods employ a shrinking horizon MPC along with a backup control law to cope with the lack of recursive feasibility of the robust optimal control problem. Moreover, the number of constraints of such a problem depends on the number of vertices of the model uncertainty polytope, which typically grows exponentially with the number of uncertain parameters.
%and quadratically on the length of the prediction horizon. 
Hence, such techniques appear to be suitable to problems involving low-order systems with few uncertain parameters.

In this paper, a new robust MPC scheme is presented for linear discrete-time systems, affected by model uncertainties described by interval matrices. Such uncertain models are commonly used, for example, in set-membership identification \cite{belforte1993,cerone2011,casini2017}.
The main novelty of the proposed approach lies in the way the set-valued dynamics of the uncertainty propagation along the prediction horizon is expressed in terms of the nominal trajectory. Namely, this is done by computing a set-theoretic bound on each term of the uncertain system impulse response.
These bounds exploit the interval matrix model description to propagate uncertainty in the form of matrix zonotopes.
A key feature is that all the bounds can be computed offline, which makes the computational burden of the online optimal control problem independent of the number of uncertain parameters.
A variable-horizon MPC approach is adopted \cite{richards2006robust,quartullo2025robust}, by including the horizon length among the optimization variables of the optimal control problem. This allows one to achieve recursive feasibility and to guarantee that the state trajectories converge in finite time to a  suitable robust invariant set, designed to guarantee robust asymptotic stability of the control scheme.
Numerical simulations show that the proposed technique yields feasibility domains comparable to those provided by the method introduced in \cite{chen2024robust}, which was shown to outperform both tube-based approaches and the method in \cite{BORRELLI22}.
Remarkably, this is achieved with a much lower computational burden, which allows the proposed approach to be applied to systems of nontrivial dimension, with several uncertain parameters.

Summing up, the main contribution of the paper is a new robust MPC scheme for linear systems with model uncertainty described by interval matrices. Its main features are: i) a new set-theoretic over-approximation of model uncertainty propagation based on matrix zonotopes; ii) a significant reduction of the online computational burden, since the aforementioned approximation leads to constraints that do not depend on the number of uncertain parameters} iii) theoretical guarantees on recursive feasibility and robust stability, which leverage the interval matrix uncertainty representation and a variable-horizon optimal control problem.

The rest of the paper is organized as follows. Section \ref{sec:notation} introduces the notation adopted in the paper and several properties of interval matrices and zonotopes that are instrumental for the subsequent developments. The robust constrained control problem is formulated in Section \ref{sec:problem}. Section \ref{sec:intervalMPC} presents the proposed robust MPC algorithm and its main properties. The results of numerical simulations are discussed in Section \ref{sec:numerical}, while Section \ref{sec:conclusions} provides concluding remarks and future perspectives.

\section{Notation and preliminaries}
\label{sec:notation}

This section provides the notation, definitions, and fundamental properties of matrix sets used in the paper.

\subsection{Notation and set operations}

All the sets considered in the paper are sets of real matrices. 
The symbol $\oplus$ represents the Minkowski sum of sets. In particular, given two sets $\mathcal{A}$, $\mathcal{B}$, $\mathcal{A}\oplus\mathcal{B} = \{A+B|\,\,A\in\mathcal{A},B\in\mathcal{B}\}$. 
%and $\mathcal{A}\ominus\mathcal{B} = \{a|\,\, \{a\}\oplus\mathcal{B}\subseteq\mathcal{A}\}$. 
The set multiplication is defined as $\mathcal{A}\mathcal{B} = \{AB|\,\,A\in\mathcal{A},B\in\mathcal{B}\}$. Similarly,  $\mathcal{A}^j = \{A^j|\,\,A\in\mathcal{A}\}$.
Given sets $\mathcal{A}$, $\mathcal{B}$, $\mathcal{C}$, it holds
\begin{align}
    (\mathcal{A} \oplus \mathcal{B}) \mathcal{C} \subseteq 
    \mathcal{A}\mathcal{C}  \oplus  \mathcal{B}\mathcal{C}, \label{eq:setdp1} \\
    \mathcal{A} ( \mathcal{B} \oplus \mathcal{C}) \subseteq 
    \mathcal{A}\mathcal{B}  \oplus  \mathcal{A}\mathcal{C}.
    \label{eq:setdp2} 
\end{align}

Let $\II=\interval{\lb{I},\,\ub{I}} = \{M\in\RR^{p\times q}|\,\, \lb{I}\leq M \leq \ub{I}\}$ be an \emph{interval matrix}, where
$\lb{I},\ub{I}\in\RR^{p\times q}$ 
and the operator $\leq$ denotes element-wise inequality. The interval matrix $\II$ can be represented as $\II = C \oplus 
\intsimple{\Delta}$, where $C= (\ub{I} + \lb{I})/2$ is the \emph{center} matrix and $
\intsimple{\Delta} =  \interval{-\Delta,\Delta}$ is the 
symmetric part of $\II$, with $ \Delta= (\ub{I} - \lb{I})/2 \geq 0$.
A \emph{matrix zonotope} is defined as $
\MM = \Big\{ M_C \in \mathbb{R}^{p \times q} \;\Big|\; 
M = M_C + \sum_{i=1}^{e} G_i \beta_i,\; |\beta_i|\leq 1 \Big\},$
where $M_C$ denotes the center, and 
$ G_1,\, G_2,\, \ldots,\, G_e \in \mathbb{R}^{p \times q}$
are the \emph{generators}.
This matrix zonotope is shortly referred to as $\MM = \langle M_C;\,G_1,\, G_2,\, \ldots,\, G_e\rangle$. 

Given two interval matrices $\II_1 = C_1 \oplus  \intsimple{\Delta_1}$, $\II_2 = C_2 \oplus  \intsimple{\Delta_2}$, their sum is equal to
\begin{equation}
\label{eq:intsum}
\II_1 \oplus \II_2 = C_1+C_2 \oplus \intsimple{\Delta_1+\Delta_2}.
\end{equation}
Let 
$\MM = \langle M;\, G_1 \dots, G_g \rangle \subseteq \mathbb{R}^{p \times q}$, 
$\mathcal{N} = \langle N;\, H_1, \dots, H_e \rangle \subseteq \mathbb{R}^{p \times q}$
be two matrix zonotopes.  
Their Minkowski sum $\MM \oplus \mathcal{N}$ is the matrix zonotope
\begin{equation}
\label{eq:mink_MZ}    
\MM \oplus \mathcal{N} = \langle M + N;\, 
G_1, \dots, G_g,\, H_1, \dots, H_e \rangle.
\end{equation}

\begin{definition}\label{def:interval_product}
Let $\II_1 = {C_1} \oplus \intsimple{\Delta_1}$ and $\II_2 = {C_2} \oplus \intsimple{\Delta_2}$ be two interval matrices.  
The interval matrix product $\II_1 * \II_2$ is defined as
\begin{equation}
    \II_1 * \II_2 
    = {C_1} {C_2} 
    \oplus 
    \intsimple{
        |{C_1}| \Delta_2 
        + \Delta_1 |{C_2}| 
        + \Delta_1 \Delta_2
    }.
\end{equation}
\end{definition}
The interval matrix product in Definition~\ref{def:interval_product}
provides an over-approximation of the exact set product, that is \cite{rump2012fast}
\begin{equation}
\label{eq:intprod}
\II_1 \II_2 \subseteq \II_1 * \II_2.
\end{equation}
By using \eqref{eq:intprod},
the product of an interval matrix $\II = C \oplus \intsimple{\Delta}$ and a matrix $M$ can be over-approximated as
\begin{align}
& \II M = C M \oplus \intsimple{\Delta}M \;\subseteq\; C M \oplus \intsimple{\Delta |M|}, \label{eq:intxM}    \\    
& M \II  = M C \oplus M \intsimple{\Delta} \;\subseteq\; M C \oplus \intsimple{ |M| \Delta} .\label{eq:intxMleft}        
\end{align}
The following property holds~\cite{althoff2010reachability}.
\begin{proposition}\label{prop:box}
The minimum interval matrix containing a matrix zonotope 
$\MM = \langle M;\, G_1, \ldots, G_g \rangle$, denoted by $\bbox{\MM}$, is given by
$\bbox{\MM} = M \oplus \intsimple{\Delta}$, where $\Delta = \sum_{i=1}^g |G_i|.
$
\end{proposition}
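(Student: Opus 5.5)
The plan is to argue entrywise, because both the matrix zonotope and the interval matrix $M\oplus\intsimple{\Delta}$ decouple into independent conditions on the individual entries $(r,s)$. The proof then has two parts: containment of $\MM$ in $M\oplus\intsimple{\Delta}$, and minimality, meaning that every interval matrix $\II=\interval{\lb{I},\ub{I}}$ with $\MM\subseteq\II$ also contains $M\oplus\intsimple{\Delta}$. Since interval matrices are ordered by inclusion through their elementwise bounds, minimality reduces to showing that $M\pm\Delta$ is attained entrywise, i.e. that for each entry the supremum and infimum over $\MM$ are exactly $M_{rs}\pm\Delta_{rs}$.

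For containment, take any element $X=M+\sum_{i=1}^g G_i\beta_i$ with $|\beta_i|\le 1$. The triangle inequality gives, entrywise,
\begin{equation*}
|X_{rs}-M_{rs}|\le\sum_{i=1}^g |(G_i)_{rs}|\,|\beta_i|\le \Delta_{rs}.
\end{equation*}
Therefore $X-M\in\intsimple{\Delta}$, which is the same as $X\in M\oplus\intsimple{\Delta}$.

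For minimality, fix an entry $(r,s)$. Choose $\beta_i=\operatorname{sign}((G_i)_{rs})$, taking $\beta_i=1$ when $(G_i)_{rs}=0$. Then the element $X^{+}\in\MM$ so obtained satisfies $X^{+}_{rs}=M_{rs}+\Delta_{rs}$, and choosing $-\beta_i$ instead gives $X^{-}\in\MM$ with $X^{-}_{rs}=M_{rs}-\Delta_{rs}$. Suppose $\MM\subseteq\interval{\lb{I},\ub{I}}$. Then $\ub{I}_{rs}\ge M_{rs}+\Delta_{rs}$ and $\lb{I}_{rs}\le M_{rs}-\Delta_{rs}$. Since $(r,s)$ was arbitrary, $\lb{I}\le M-\Delta$ and $M+\Delta\le\ub{I}$, so $M\oplus\intsimple{\Delta}\subseteq\II$.

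The only subtle point is that the maximizing choice of $\beta$ depends on the entry $(r,s)$. This is harmless, because minimality of an interval matrix only requires each entry's bounds to be attained separately, not simultaneously by a single matrix. With that observation, the bounding box $\bbox{\MM}$ is exactly $M\oplus\intsimple{\Delta}$ with $\Delta=\sum_{i=1}^g|G_i|$.
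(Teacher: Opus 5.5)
Your proof is correct. The paper gives no proof of Proposition~\ref{prop:box}: it states the result as a known property and cites \cite{althoff2010reachability}, so there is no in-paper argument to compare against. What you provide is the standard self-contained justification of that cited fact.

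The triangle inequality gives $\MM\subseteq M\oplus\intsimple{\Delta}$. The entry-dependent choice $\beta_i=\operatorname{sign}((G_i)_{rs})$ shows that each bound $M_{rs}\pm\Delta_{rs}$ is attained. Together these establish that $M\oplus\intsimple{\Delta}$ is the least element, not merely a minimal one, among interval matrices containing $\MM$. You are right that the extremizing $\beta$ may differ from entry to entry. That is exactly why the box is tight entrywise while $\MM$ itself is generally a strict subset of it.

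Compared with the citation, your version also makes explicit which half is used where in the paper. The containment half alone supports the over-approximations in Propositions~\ref{prop:overapprox} and~\ref{prop:Bk}. The minimality half is what makes $\bbox{\cdot}$ equal to the closed form used in Proposition~\ref{prop:lesscons} and Lemma~\ref{lem:Ij}.
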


\subsection{Set approximations}

In the following, some tools are provided which allow one to construct outer approximations of sets defined by the multiplication between interval matrices and matrix zonotopes.
Let us start by introducing a useful matrix decomposition.

\begin{definition}\label{def:ewdec}
    Given a matrix $ M \in \mathbb{R}^{l \times q} $, we define the \emph{entrywise decomposition} as the set of matrices $\mathcal{E}(M) = \left\{ M^{(k)} \right\}_{k=1}^{lq}$, where each $ M^{(k)} \in \mathbb{R}^{l \times q} $ is a matrix such that
$$
M^{(k)}_{ij} =
\begin{cases}
M_{ij}, & \text{if } k = (i - 1)l + j, \\
0, & \text{otherwise}.
\end{cases}
$$
\end{definition}
In words, $ \mathcal{E}(M) $ yields a sequence of matrices, each containing exactly one entry of $ M $ in its original position and zeros elsewhere.
Clearly, $\sum_{k=1}^{lq} M^{(k)} = M$. Notice that any interval matrix $\II = C \oplus 
\intsimple{\Delta}$ can be represented as the matrix zonotope $\langle c; \,\mathcal{E}(\Delta)\rangle$.  
Moreover, the following property holds.
\begin{proposition}\label{prop:AdecM}
Let $A\in\mathbb{R}^{n \times l}$ and $M \in \mathbb{R}^{l \times q}$, with $M \geq 0$. Then, 
\begin{equation}
\label{eq:AdecM}
\sum_{k=1}^{lq} |A M^{(k)} | =|A| M.
\end{equation}
\end{proposition}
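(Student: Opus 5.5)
The plan is to argue entrywise and column by column, using the fact that each $M^{(k)}$ in the entrywise decomposition $\mathcal{E}(M)$ of Definition~\ref{def:ewdec} has exactly one nonzero entry. The indexing formula in Definition~\ref{def:ewdec} is slightly garbled: for $l\neq q$ one would naturally use $k=(i-1)q+j$. The only property I will actually use is that $k\mapsto(i,j)$ is a bijection between $\{1,\dots,lq\}$ and the index pairs $\{1,\dots,l\}\times\{1,\dots,q\}$. I will therefore reindex the sum over $k$ as a double sum over pairs $(i,j)$, writing $M^{(i,j)}$ for the matrix whose only (possibly) nonzero entry is $M_{ij}$ in position $(i,j)$. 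Equivalently, $M^{(i,j)} = M_{ij}\, e_i e_j^\top$, where $e_i\in\RR^l$ and $e_j\in\RR^q$ are canonical basis vectors.

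The first step is to compute each product explicitly:
\[
A M^{(i,j)} = M_{ij}\, (A e_i) e_j^\top .
\]
This is the matrix whose $j$-th column equals $M_{ij} A_{:,i}$, with $A_{:,i}$ the $i$-th column of $A$, and whose other columns are zero. Taking entrywise absolute values and using $M_{ij}\ge 0$ gives
\[
|A M^{(i,j)}| = M_{ij}\, |A e_i|\, e_j^\top = M_{ij}\, |A|\, e_i e_j^\top = |A|\, M^{(i,j)} .
\]
This is where the hypothesis $M\geq 0$ is needed. It lets the scalar $M_{ij}$ come out of the absolute value unchanged. Without it we would get $|M_{ij}|$, and the identity would become $\sum_k |AM^{(k)}| = |A|\,|M|$.

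The second step is to sum over all pairs $(i,j)$ and use linearity of matrix multiplication together with $\sum_{i,j} M^{(i,j)} = M$, which the paper notes right after Definition~\ref{def:ewdec}:
\[
\sum_{k=1}^{lq} |A M^{(k)}| = \sum_{i,j} |A|\, M^{(i,j)} = |A| \sum_{i,j} M^{(i,j)} = |A| M .
\]
As a sanity check, compare entry $(r,j)$ of both sides. The left side is $\sum_i |A_{ri}| M_{ij}$, which is exactly $(|A|M)_{rj}$.

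There is no real obstacle here. The only delicate point is the bookkeeping of the index map $k\leftrightarrow(i,j)$, which is why I reindex by pairs $(i,j)$ at the outset. The key step is factoring the nonnegative scalar $M_{ij}$ out of the entrywise absolute value, because the single nonzero column of $AM^{(i,j)}$ is a scalar multiple of a column of $A$.
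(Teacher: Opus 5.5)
Your proof is correct and follows essentially the same route as the paper. Both arguments note that $AM^{(k)}$ has a single nonzero column, namely $M_{ij}$ times the $i$-th column of $A$, factor out the nonnegative scalar $M_{ij}$, and then sum over all pairs $(i,j)$. Your reindexing by pairs $(i,j)$ is also justified, since the map $k=(i-1)l+j$ in Definition~\ref{def:ewdec} is a bijection onto $\{1,\dots,lq\}$ only when $l=q$; it should read $k=(i-1)q+j$.
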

\emph{Proof:} From Definition \ref{def:ewdec}, $|A M^{(k)} |= [0 \ldots 0~ |A_{| i}|M_{ij}~ 0 \ldots 0]$, where $A_{| i}$ denotes the $i$th column of A. Then,
$$
\sum_{k=1}^{lq} |A M^{(k)} | =\sum_{i=1}^{l}\sum_{j=1}^{q} |A_{| i}| M_{ij}
$$
which gives \eqref{eq:AdecM}.\qed\\
In the paper, the shorthand notation $A\mathcal{E}(M)$ is adopted to denote the set of matrices $AM^{(1)}, AM^{(2)}, \ldots, AM^{(lq)}$.

Let us introduce an operator that plays a key role throughout the paper.
\begin{definition}\label{def:operatorT}
Let $\II \in \mathbb{R}^{l \times p}$ be an interval matrix of the form
$ \II \;=\; C \oplus \intsimple{\Delta},$
and $\MM = \langle M;\, G_1, \ldots, G_g\rangle \subseteq \mathbb{R}^{p \times q}$ be a matrix zonotope. The operator $\TT_{\II}(\MM)$ is defined as
\begin{equation}\label{eq:Topdef}
    \TT_{\II}(\MM) = \langle CM;\, CG_1, \ldots, CG_g,\, F^{(1)}, \ldots, F^{(lq)} \rangle,
\end{equation}
where 
\begin{equation} \label{eq:Fgen}
F \;=\; {\Delta}\!\left( \,\lvert M \rvert + \sum_{j=1}^g \lvert G_j \rvert \,\right) \;\in\; \mathbb{R}^{l \times q}
\end{equation}
and the additional generators $F^{(i)}$, $i=1,\dots,lq$, in \eqref{eq:Topdef} are given by the entrywise decomposition of $F$, i.e.,
$\{F^{(i)}\}_{i=1}^{lq} = \mathcal{E}(F)$.
\end{definition}
The next result shows that the operator $\TT$ can be used to over-approximate the product between an interval matrix and a matrix zonotope.
\begin{proposition}\label{prop:overapprox}
Let $\II \in \mathbb{R}^{l \times p}$ be an interval matrix and $\MM\subseteq \mathbb{R}^{p \times q}$ be a matrix zonotope, then $\II\MM\subseteq\TT_{\II}(\MM).$
\end{proposition}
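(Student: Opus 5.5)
Let $A\in\II$ and $X\in\MM$ be arbitrary. Write $A = C + D$ with $|D|\le\Delta$ entrywise, and $X = M + \sum_{j=1}^g G_j\beta_j$ with $|\beta_j|\le 1$. Then
\begin{equation*}
AX = CM + \sum_{j=1}^g CG_j\beta_j + DX .
\end{equation*}
The first two groups of terms are, for the same coefficients $\beta_j$, the center $CM$ and the generators $CG_1,\ldots,CG_g$ of $\TT_{\II}(\MM)$. So it suffices to show that $DX$ can be written as $\sum_{i=1}^{lq} F^{(i)}\gamma_i$ with $|\gamma_i|\le 1$. Since the generator coefficients of a zonotope are chosen independently, combining the $\beta_j$ with these $\gamma_i$ then shows $AX\in\TT_{\II}(\MM)$.

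The key step is an entrywise bound on $DX$. Using $|D|\le\Delta$ and the triangle inequality,
\begin{equation*}
|DX| \le |D|\,|X| \le \Delta\Big(|M| + \sum_{j=1}^g |G_j|\,|\beta_j|\Big) \le \Delta\Big(|M| + \sum_{j=1}^g |G_j|\Big) = F ,
\end{equation*}
where we use $\Delta\ge 0$ so that multiplication by $\Delta$ preserves entrywise inequalities. Equivalently, $DX\in\intsimple{\Delta}\MM\subseteq\intsimple{\Delta\,\bbox{\cdot}}$-type bounds could be invoked via \eqref{eq:intxMleft} and Proposition~\ref{prop:box}, but the direct inequality is cleaner.

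It remains to represent any matrix $N$ with $|N|\le F$ in the zonotope generated by $\mathcal{E}(F)$. Recall from the remark after Definition~\ref{def:ewdec} that $\intsimple{F}=\langle 0;\mathcal{E}(F)\rangle$. Concretely, for the entry $(i,j)$ corresponding to index $k$, set $\gamma_k = N_{ij}/F_{ij}$ if $F_{ij}>0$, and $\gamma_k=0$ otherwise. When $F_{ij}=0$ the bound forces $N_{ij}=0$, so this choice is consistent. Then $|\gamma_k|\le1$ and $\sum_k F^{(k)}\gamma_k = N$. Applying this with $N=DX$ completes the inclusion.

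The main obstacle is mild. It is making sure the $\gamma$ coefficients may depend on $\beta$ (and on $D$). This is allowed, because membership in a zonotope only requires the existence of some admissible coefficient vector for each point, with no dependence constraints between coefficients. The other point to handle is the degenerate case of zero entries of $F$, treated above.
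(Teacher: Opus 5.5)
Your proof is correct and follows the same route as the paper. Both arguments split $\II=C\oplus\intsimple{\Delta}$, keep $C\MM=\langle CM;CG_1,\ldots,CG_g\rangle$ exactly, bound the $\intsimple{\Delta}\MM$ part by $\intsimple{F}$, and rewrite $\intsimple{F}=\langle 0;\EE(F)\rangle$; the paper does this through the set inclusions \eqref{eq:setdp1}, \eqref{eq:setdp2}, \eqref{eq:intxM}, Proposition~\ref{prop:box} and \eqref{eq:intprod}, whereas you do it pointwise via $|DX|\le|D|\,|X|\le F$ (your unused aside should cite \eqref{eq:intxM} rather than \eqref{eq:intxMleft}).
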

\emph{Proof. } Let $\II$ and $\MM$ be as in Definition \ref{def:operatorT}.
Then, by using \eqref{eq:setdp1}
\begin{equation}
\label{eq:IM}
%\begin{split}
\II \MM = (C \oplus \intsimple{\Delta}) \langle M;\, G_1, \ldots, G_g\rangle
\subseteq \langle C M;\, C G_1, \ldots, C G_g \rangle \oplus \intsimple{\Delta}\MM.
%\end{split}
\end{equation}
Since $\MM$ can be expressed as 
$$
\MM = M \oplus \langle 0;\, G_1, \ldots, G_g \rangle,
$$ 
by exploiting \eqref{eq:setdp2}, \eqref{eq:intxM} and Proposition \ref{prop:box}, one gets
\begin{equation*}
    \begin{split}
        \intsimple{\Delta}\MM & \subseteq \intsimple{\Delta}M \oplus\intsimple{\Delta}\langle0;\,G_1,\ldots,G_g\rangle\\
        %& \subseteq \intsimple{\Delta|M|} \oplus \left\{ M:\,\,M \in \sum_{i=1}^g \intsimple{\Delta}G_i\intsimple{1}\right\}\\
        & \subseteq \intsimple{\Delta|M|} \oplus \intsimple{\Delta} \intsimple{\sum_{i=1}^g |G_i|} \\
        & \overset{\eqref{eq:intprod}}{\subseteq} \intsimple{\Delta|M|} \oplus \intsimple{ \Delta \sum_{i=1}^g |G_i|}\\
        & \overset{\eqref{eq:intsum}}{=} \intsimple{\Delta\left(|M| + \sum_{i=1}^g |G_i|\right)}=\intsimple{F}= \langle 0; \EE(F) \rangle.
    \end{split}
\end{equation*}
Hence, the results follows from \eqref{eq:IM}, by using \eqref{eq:mink_MZ} and the definition of $\TT_{\II}(\MM)$ in \eqref{eq:Topdef}.
\qed

\begin{definition}\label{def:jfold}
    For $j \in \mathbb{N}$, we define the $j$-fold application of the operator $\TT_{\II}$ as
$$
    \TT_{\II}^j(\MM) \;=\; 
    \underbrace{\TT_{\II} \circ \TT_{\II} \circ \cdots \circ \TT_{\II}}_{j \text{ times}} (\MM) 
    %= \TT_{\II}\circ\TT_{\II}^{j-1}(\MM).
$$
\end{definition}
By iteratively applying Proposition~\ref{prop:overapprox}, one has 
\begin{equation}
\label{eq:IjMapprox}
\II^j\MM\subseteq\TT_{\II}^j(\MM).
\end{equation}

\begin{proposition}\label{prop:lesscons}
    Let $\II\in \mathbb{R}^{l \times p}$ and $\MM\in \mathbb{R}^{p \times q}$ denote an interval matrix and a matrix zonotope, respectively. Then, the following inclusion holds
    \begin{equation}
        \bbox{\TT_{\II}(\MM)} \subseteq \II * \bbox{\MM}.
    \end{equation}
\end{proposition}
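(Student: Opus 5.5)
We compute both sides explicitly as interval matrices in center--radius form and compare centers and radii. Both sides will turn out to have the same center, so the inclusion reduces to an entrywise inequality between the radius matrices.

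Let $\II = C\oplus\intsimple{\Delta}$ and $\MM=\langle M;\,G_1,\ldots,G_g\rangle$, and set $S=\sum_{j=1}^g |G_j|$. By Proposition~\ref{prop:box}, $\bbox{\MM}=M\oplus\intsimple{S}$. Definition~\ref{def:interval_product}, applied with $C_1=C$, $\Delta_1=\Delta$, $C_2=M$ and $\Delta_2=S$, then gives
\begin{equation*}
\II*\bbox{\MM} = CM \oplus \intsimple{|C|S + \Delta|M| + \Delta S}.
\end{equation*}

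Next we apply Proposition~\ref{prop:box} to $\TT_{\II}(\MM)$ as given in \eqref{eq:Topdef}. Its center is $CM$, so the centers agree. Its radius is
\begin{equation*}
\sum_{j=1}^g |CG_j| + \sum_{i=1}^{lq} |F^{(i)}|.
\end{equation*}
For the second sum, $F\geq 0$ and the entries of $\EE(F)$ have disjoint supports, so $\sum_i |F^{(i)}| = F = \Delta(|M|+S)$. This also follows from Proposition~\ref{prop:AdecM} with $A$ the identity. For the first sum, the triangle inequality for absolute values of matrix products gives $|CG_j|\leq |C||G_j|$ entrywise, and hence $\sum_j |CG_j| \leq |C| S$. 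Putting these together, the radius of $\bbox{\TT_{\II}(\MM)}$ is bounded entrywise by $|C|S + \Delta|M| + \Delta S$.

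Finally, two interval matrices with the same center are nested exactly when their radii are ordered entrywise, so the inclusion follows. The only genuinely non-identity step is the bound $|CG_j|\leq|C||G_j|$, which is where the inclusion can be strict. I expect no real obstacle beyond keeping the entrywise decomposition bookkeeping straight, in particular confirming that the absolute values of the generators in $\EE(F)$ sum back exactly to $F$.
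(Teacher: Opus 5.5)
Your proof is correct and follows essentially the same route as the paper: you compute both boxes explicitly via Proposition~\ref{prop:box} and Definition~\ref{def:interval_product}, use $\sum_i |F^{(i)}| = F = \Delta(|M|+S)$, and invoke the single inequality $|CG_j|\leq |C|\,|G_j|$. Your explicit remark that interval matrices with equal centers are nested exactly when their radii are ordered entrywise makes the final step slightly cleaner than the paper's chain of equalities and inclusions, and it avoids the index typo ($lq$ in place of $g$) that appears there.
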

\emph{Proof:} From Proposition~\ref{prop:box}, Definition~\ref{def:operatorT}, and the notation introduced therein, it follows that
\begin{equation*}
    \begin{split}
        \bbox{\TT_{\II}(\MM)} & = \bbox{\langle CM;\, CG_1, \ldots, CG_g,\, F^{(1)}, \ldots, F^{(lq)} \rangle}\\
        & = CM\oplus\intsimple{\sum_{i=1}^{g}|CG_i|+\sum_{i=1}^{lq} |F^{(i)} |}\\
        &=  CM\oplus\intsimple{\sum_{i=1}^{g}|CG_i|+\Delta|M| +\Delta\sum_{i=1}^{lq} |G_i |}\\
        & \subseteq CM \oplus\intsimple{|C|\sum_{i=1}^{g}|G_i|+\Delta|M| +\Delta\sum_{i=1}^{lq} |G_i |}\\
        & = \II * \bbox{\MM}.
    \end{split}
\end{equation*}\qed

\section{Problem Formulation}
\label{sec:problem}
Consider the linear time-invariant system
	\begin{equation}\label{eq:sys}
		x(k+1) = Ax(k) + Bu(k),
	\end{equation}
where $x(k) \in\RR^n$ is the system state at time $k$, $u(k) \in\RR^m$ is the control input. The system matrices $A$ and $B$ are unknown and belong to an interval matrix set, i.e., 
\begin{equation}
\label{eq:sys2}
\interval{A\quad B}\in\MD =\interval{\Ac\quad\Bc}\oplus \intsimple{\Delta_A\quad \Delta_B},
\end{equation}
with known $\Ac,\,\Bc,\,\Delta_A,\,\Delta_B$. Let us denote $\Mdelta=\intsimple{\Delta_S}$ with $\Delta_S=\left[\Delta_A\quad\Delta_B\right]$. Then, system~\eqref{eq:sys} can be rewritten as $$x(k+1) = \Ac x(k) + \Bc u(k) + A_{\Delta} x(k) + B_{\Delta} u(k),$$
where $A_{\Delta} = A-\Ac$ and $B_{\Delta} = B-\Bc$ and $\left[A_{\Delta}\quad B_{\Delta}\right]\in \Mdelta$.
The control objective is to drive the state to the origin despite the model uncertainty. Moreover, the state and input are subject to the constraints
\begin{equation}\label{eq:constraint}
    x(k) \in \XX,\,u(k) \in \UU,
\end{equation}
with $\XX$, $\UU$ being convex polytopic sets.
Consider now the nominal system, given by 
\begin{equation}\label{eq:sys_nominal}
    z(k+1) = \Ac z(k) + \Bc v(k).
\end{equation}
Let the feedback policy for system~\eqref{eq:sys} be designed as
\begin{equation}\label{eq:u_RMPC}
    u(k) = K(x(k)-z(k)) + v(k),
 \end{equation} 
where $v(k)$ is a nominal control input and $K$ is a matrix gain satisfying the following assumption.
\begin{assumption}\label{assum:K}
    The matrix gain $K$ is such that $A_K = A+B K$ is Schur for all $\left[A\quad B\right]\in\MD$.
\end{assumption}
A feedback gain $K$ satisfying Assumption~\ref{assum:K} can be computed using synthesis methods based on linear matrix inequalities (LMI) (see, e.g., \cite{boyd1994lmi,kothare1996robust,CANNONbook}).
Clearly, also the matrix $A_K$ is uncertain and one has 
\begin{equation}
\label{eq:cAK}
    A_K\in\AK = \AKc \oplus \Mdelta\myvec{I}{K},
\end{equation}
where $\AKc = \Ac+\Bc K$.

The error between the true and the nominal state is defined as $e(k) = x(k)-z(k)$ and its dynamics evolve as
%\begin{equation}\label{eq:error_dyn}
%    e(k+1) = \AKc e(k) + A_{\Delta} x(k) + B_{\Delta} u(k)
%\end{equation}
%or equivalently
\begin{equation}\label{eq:error_dyn2}
    e(k+1) = A_K e(k) + A_{\Delta} z(k) + B_{\Delta} v(k).
\end{equation}
Therefore, due to the uncertainty affecting $A$ and $B$, the error $e(k)$ evolves within a set sequence $\SS(k)$, i.e. $e(k)\in\SS(k)$, where $\SS(k)$ satisfy the set-valued dynamics
\begin{equation}\label{eq:Sk_recursion}
    \SS(k+1) = \AK\SS(k) \oplus \Mdelta\left[\begin{array}{c}
         z(k)\\
         v(k)
    \end{array}\right].
\end{equation}

In this work, the aim is to compute the nominal input $v(k)$  by solving a variable-horizon optimal control problem defined as 
\begin{equation}\label{eq:mpc_tube_exact} 
    \begin{aligned}
        \underset{N_k,\textbf{v}(k),\textbf{z}(k)}{\text{min }} &   J_k(N_k,\textbf{v}(k),\textbf{z}(k))\\
        \text{s.t.} \quad & z_k(0)= x(k)\\
        & {z_k}(j+1)=\Ac {z_k}(j)+\Bc {v_k}(j)  \\
        &  {z_k}(j)\oplus\SS_k(j) \subseteq \XX, \quad j=0,\ldots,N_k \\
        &  v_k(j)\oplus K\SS_k(j)\subseteq \UU, \quad j=0,\ldots, N_k-1 \\
        &  z_k(N_k) \oplus \SS_k(N_k) \subseteq\XX_{f} \\
        & N_k \in \mathbb{N}^+
    \end{aligned}
\end{equation}
where the sequences $\boldsymbol{v}_k= \left(v_k(0),\ldots,\,v_k(N_k-1)\right)$ and $\boldsymbol{z}_k = \left(z_k(0),\ldots,\,z_k(N_k)\right)$ are the nominal inputs and states, respectively, along the prediction horizon of length $N_k$. 
Then, as it is usual in MPC, one sets $v(k)=v_k(0)$ and the process is repeated at every time step.
In \eqref{eq:mpc_tube_exact}, the cost $J_k$ is defined as 
\begin{equation}\label{eq:cost}
    J_k = \gamma N_k + \sum_{j=0}^{N_k-1} \ell(v_k(j)-Kz_k(j))
\end{equation}
where $\gamma> 0$ and $\ell(\cdot)$ is a positive function, i.e., $\ell(x)\geq 0,\,\forall x$. In accordance with~\eqref{eq:error_dyn2}, the prediction error sets in~\eqref{eq:mpc_tube_exact} are defined as 
\begin{equation}\label{eq:setdyn}
\SS_k(j+1) = \AK\SS_k(j)\oplus\Mdelta\left[\begin{array}{c}
     z_k(j)\\
     v_k(j) 
\end{array}\right]
\end{equation}
with the initial set being $\SS_k(0) = \{0\}$, due to the initial constraint $z_k(0) = x(k)$.
The sets $\SS_k(j)$ can be rewritten as
\begin{equation}\label{eq:Sk_response}
    \SS_k(j) =  \sum_{i=0}^{j-1}\AK^{j-i-1}\Mdelta\left[\begin{array}{c}
         z_k(i)\\
         v_k(i)
    \end{array}\right].
\end{equation}
With the aim to ensure robust stability of the control scheme, the terminal set $\mathcal{X}_f$ is designed as follows.
\begin{assumption}\label{assum:RPI}
    The terminal set $\XX_f$ is robust positive invariant (RPI) for the uncertain dynamics $ x(k+1) = A_K x(k) $ for all $A_K\in\AK$, i.e., $(A+BK)x\in\XX_f$, $\forall x\in \XX_f,\,\forall \left[A\quad B\right]\in\MD$. This is equivalent to require that 
    \begin{equation}\label{eq:RPI} 
        \AK\XX_f \subseteq\XX_f.
    \end{equation}
    Moreover, the set $\XX_f$ is such that
    \begin{subequations}
        \begin{align}
            \XX_f&\subseteq\XX, \label{subeq:stateconstr} \\
            K\XX_f&\subseteq\UU. \label{subeq:inputconstr}
        \end{align}
    \end{subequations}
\end{assumption}

\begin{remark}
    Methods for computing or approximating robust positive invariant sets are available in the literature. These include iterative algorithms for polytopic or interval-bounded systems \cite{kolmanovsky1998theory},~\cite[Ch.~10]{borrelli2017predictive}. LMI-based approaches for constructing RPI sets with ellipsoidal, polytopic or parallelotopic shapes are also available as discussed in \cite[Ch.~5]{CANNONbook}. Such methods can be used to determine a set $\XX_f$ satisfying Assumption~\ref{assum:RPI}.
\end{remark}

The rationale behind problem~\eqref{eq:mpc_tube_exact} is that uncertainty propagation through the sets $\mathcal{S}_k$ guarantees robust constraint satisfaction along the prediction horizon. Moreover, the variable-horizon approach adds further flexibility to the optimal control problem, which can be exploited to guarantee recursive feasibility. Unfortunately, solving problem~\eqref{eq:mpc_tube_exact} is practically intractable because the exact computation of the error sets $\SS_k(j)$ is prohibitive. This is due both to the need for multiple exact set multiplications, and to their dependence on the optimization variables $\mathbf{v}_k$ and $\mathbf{z}_k$, which prevents their offline pre-computation.
%Moreover, the set inclusions involved in the state and input constraints are highly impractical for standard solvers, since in general they require the computation of the vertices of polytopic sets. 
To address these issues, in the next section, we propose a reformulation of problem~\eqref{eq:mpc_tube_exact} that renders it efficiently solvable online.

\section{Robust MPC with Interval Matrices}
\label{sec:intervalMPC}
The main idea of the reformulated control scheme is to replace the exact error sets $\SS_k(j)$ appearing in the state and input constraints of problem~\eqref{eq:mpc_tube_exact} with a suitable linear combination of interval matrices that can be computed offline. To this aim, the control problem~\eqref{eq:mpc_tube_exact} is modified as follows.
\begin{subequations}\label{eq:mpc_tube_box}
    \begin{align}
        \min_{N_k,\mathbf v(k),\mathbf z(k)}\ & J_k(N_k,\mathbf v(k),\mathbf z(k)) \notag\\
        \text{s.t.}\quad & z_k(0)= x(k) \label{subeq:mpc_init}\\
        & z_k(j+1)=\Ac z_k(j)+\Bc v_k(j), \quad j=0,\ldots,N_k-1 \label{subeq:mpc_dynamic}\\
        & z_k(j)\oplus\BB_k(j) \subseteq \XX, \quad j=0,\ldots,N_k \label{subeq:mpc_state}\\
        & v_k(j)\oplus K\BB_k(j)\subseteq \UU, \quad j=0,\ldots, N_k-1 \label{subeq:mpc_input}\\
        & z_k(N_k) \oplus \BB_k(N_k) \subseteq \XX_f \label{subeq:mpc_terminal}\\
        & N_k \in \mathbb{N}^+ \notag
    \end{align}
\end{subequations}
%where the cost is the same as in~\eqref{eq:cost}. 
The sets $\BB_k(j)$ in~\eqref{subeq:mpc_state}-\eqref{subeq:mpc_terminal} are defined as 
\begin{equation}\label{eq:Bbox}
    \BB_k(j) = \sum_{i=0}^{j-1} \II(j-i-1)\myvec{z_k(i)}{v_k(i)},
\end{equation}
where
\begin{equation}\label{eq:Ibox}
    \II(j) =  \bbox{\TT^j_{\MDK}(\II_{\Delta})},
\end{equation}
and 
$\MDK = \AKc \oplus \intsimple{\Delta_K}$,
with $\Delta_K=\Delta_A + \Delta_B |K|$.
Denoting by $N^*_k$, $\mathbf{v^*(k)}$, $\mathbf{z^*(k)}$, the optimal solution of problem~\eqref{eq:mpc_tube_box}, the nominal control input in \eqref{eq:u_RMPC} is then chosen as
\begin{equation}\label{eq:optvsel}
    v(k)=v^*_k(0).
\end{equation}
The following property holds.
\begin{proposition}
\label{prop:Bk}
The sets $\BB_k(j)$ in \eqref{eq:Bbox} are such that    $\SS_k(j) \subseteq \BB_k(j),\,\forall k,j$.
\end{proposition}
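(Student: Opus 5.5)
The plan is to compare \eqref{eq:Sk_response} with \eqref{eq:Bbox} term by term. Both are Minkowski sums over $i=0,\ldots,j-1$, and Minkowski sums preserve inclusion. It therefore suffices to show, for every $h\ge 0$ and every fixed vector $w=[z_k(i)^T\; v_k(i)^T]^T$,
\begin{equation*}
\AK^{h}\Mdelta w \subseteq \II(h)\, w .
\end{equation*}
Here $\AK^h$ denotes the set of products $A_{K,1}\cdots A_{K,h}$ with each factor taken independently in $\AK$. This is the correct reading of the set recursion \eqref{eq:setdyn}: the set products are applied one factor at a time, and taking independent factors only enlarges the set. Given this inclusion for each $h=j-i-1$, summing over $i$ yields $\SS_k(j)\subseteq\BB_k(j)$. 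The case $j=0$ is trivial, since both sets equal $\{0\}$.

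\textbf{Step 1: enclose $\AK$ in $\MDK$.} By \eqref{eq:cAK}, any $A_K\in\AK$ has the form $\AKc+[A_\Delta\; B_\Delta][I;\,K] = \AKc + A_\Delta + B_\Delta K$. Elementwise, $|A_\Delta + B_\Delta K|\le \Delta_A + \Delta_B|K| = \Delta_K$; this is the same bound as \eqref{eq:intxM}, applied to $\intsimple{\Delta_S}$ times $[I;\,K]$. Hence $\AK\subseteq \MDK$, and consequently $\AK^h\Mdelta \subseteq \MDK^h\Mdelta$.

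\textbf{Step 2: bound the products.} The interval matrix $\Mdelta$ is the matrix zonotope $\langle 0;\EE(\Delta_S)\rangle$. Applying \eqref{eq:IjMapprox} with $\II=\MDK$ and $\MM=\Mdelta$ gives $\MDK^h\Mdelta\subseteq \TT^h_{\MDK}(\Mdelta)$. By Proposition~\ref{prop:box}, this zonotope is contained in its interval hull, so $\MDK^h\Mdelta \subseteq \bbox{\TT^h_{\MDK}(\Mdelta)} = \II(h)$. The iterated use of Proposition~\ref{prop:overapprox} is legitimate because each application of $\TT$ returns a matrix zonotope, which can serve as the input to the next application.

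\textbf{Step 3: multiply by the vector.} The inclusion $\mathcal{A}\subseteq\mathcal{B}$ implies $\mathcal{A}w\subseteq\mathcal{B}w$ for a singleton $w$. Therefore $\AK^h\Mdelta w\subseteq \II(h)w$, and in \eqref{eq:Bbox} the set $\II(h)w$ is read as exactly this set product.

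\textbf{Main obstacle.} The delicate point is the order of the products in $\AK^{h}\Mdelta$. One must confirm that iterating $\TT$ from the left matches the recursion \eqref{eq:setdyn}, where $\AK$ multiplies the accumulated set from the left. Unrolling \eqref{eq:setdyn} confirms this: each step left-multiplies by a fresh $A_K$, exactly as in $\MDK(\MDK(\cdots\Mdelta))$. A second point is that \eqref{eq:Sk_response} is an inclusion of the unrolled recursion into a sum of products, rather than an identity. Since $\AK(\mathcal{P}\oplus\mathcal{Q})\subseteq \AK\mathcal{P}\oplus\AK\mathcal{Q}$ by \eqref{eq:setdp2}, the recursion \eqref{eq:setdyn} is contained in the right-hand side of \eqref{eq:Sk_response}. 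So even if that expression is only an outer bound, the chain of inclusions still holds, and the conclusion $\SS_k(j)\subseteq\BB_k(j)$ follows.
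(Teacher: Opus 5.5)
Your proof is correct and follows the paper's route. You enclose $\AK$ in $\MDK$ via \eqref{eq:cAK} and \eqref{eq:intxM}, use \eqref{eq:IjMapprox} with Proposition~\ref{prop:box} to get $\AK^{h}\Mdelta\subseteq\II(h)$, and then compare \eqref{eq:Sk_response} with \eqref{eq:Bbox} term by term. Your extra care, reading $\AK^{h}$ as a product of independently chosen factors and noting via \eqref{eq:setdp2} that unrolling \eqref{eq:setdyn} gives only an inclusion into \eqref{eq:Sk_response}, is a sound refinement of a point the paper's three-line proof skips (its definition $\mathcal{A}^j=\{A^j\}$ does not literally match the recursion), but it does not change the argument.
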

\emph{Proof.} 
From \eqref{eq:cAK} and \eqref{eq:intxM}, one has that $\AK  \subseteq \MDK$. Hence, as a consequence of \eqref{eq:IjMapprox} one gets $\AK^{j}\Mdelta \subseteq \II(j)$ in \eqref{eq:Ibox}. Therefore, the result follows from definitions \eqref{eq:Sk_response} and \eqref{eq:Bbox}.\qed

Proposition \ref{prop:Bk} implies that constraints \eqref{subeq:mpc_state}-\eqref{subeq:mpc_input} are a conservative version of the corresponding constraints in problem \eqref{eq:mpc_tube_exact}. Notice that the sets $\BB_k(j)$ explicitly depend on the nominal state and input sequences $\textbf{z}_k$ and $\textbf{v}_k$, which are optimization variables of problem \eqref{eq:mpc_tube_box}. On the other hand, it is worth remarking that the interval matrices $\II(j)$ can be computed offline for all $j$, since they do not depend on $\textbf{v}_k$ and $\textbf{z}_k$. This is a key feature that allows to remarkably reduce the online computational burden.

\begin{remark} 
The use of the operator $\TT$ in \eqref{eq:Ibox} is crucial in order to reduce the conservatism in the approximation of the sets $\SS_k(j)$. In fact, the proposed approximation combines zonotopic and interval representations by exploiting Proposition~\ref{prop:overapprox} and \eqref{eq:IjMapprox}. Specifically, the error dynamics are propagated using the operator $\TT$ 
instead of the true set-valued dynamics~\eqref{eq:setdyn}. The resulting over-approximating matrix zonotope is then bounded by a tight interval matrix. In this way, according to Proposition~\ref{prop:lesscons}, the sets $\BB_k(j)$ provide an approximation of the true error sets $\SS_k(j)$ which is less conservative than that one would obtain by directly propagating interval matrices along the prediction horizon, using the interval matrix product operator in Definition~\ref{def:interval_product}. 
\end{remark}

Let us now establish recursive feasibility and convergence properties of the MPC scheme based on problem~\eqref{eq:mpc_tube_box}.

\begin{assumption}\label{assum:P0}
    Problem~\eqref{eq:mpc_tube_box} is feasible at time $k=0$ with associated optimal cost $J_0^*$.
\end{assumption}

\begin{theorem}\label{thm:RF}
Let Assumptions~\ref{assum:K}-\ref{assum:P0} be satisfied, then problem~\eqref{eq:mpc_tube_box} admits a feasible solution for all $k>0$ and for any $\left[A\quad B\right]\in\MD$.
\end{theorem}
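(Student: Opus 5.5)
The plan is the usual shift-and-append argument for variable-horizon MPC, carried out by induction on $k$. It suffices to show that feasibility at time $k$ implies feasibility at time $k+1$ for every realization $\left[A\quad B\right]\in\MD$. Let $N=N_k^*$, $\mathbf v^*(k)$, $\mathbf z^*(k)$ be optimal at time $k$. Since $z_k^*(0)=x(k)$, \eqref{eq:u_RMPC} gives $u(k)=v_k^*(0)$. Hence
\[
x(k+1)=z_k^*(1)+e,\qquad e=A_\Delta z_k^*(0)+B_\Delta v_k^*(0)\in\Mdelta\myvec{z_k^*(0)}{v_k^*(0)}=\SS_k(1)\subseteq\BB_k(1),
\]
where the last inclusion is Proposition~\ref{prop:Bk}. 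I would then split into the cases $N=1$ and $N\ge 2$.

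\emph{Case $N=1$.} Constraint \eqref{subeq:mpc_terminal} gives $x(k+1)\in\XX_f$. Take the candidate $N_{k+1}=1$, $z_{k+1}(0)=x(k+1)$ and $v_{k+1}(0)=Kx(k+1)$. Then $z_{k+1}(1)\oplus\BB_{k+1}(1)=\AKc x(k+1)\oplus\Mdelta\myvec{I}{K}x(k+1)$. This set is exactly $\AK x(k+1)$ by \eqref{eq:cAK}, so it lies in $\XX_f$ by \eqref{eq:RPI}. The state and input constraints at $j=0$ follow from \eqref{subeq:stateconstr} and \eqref{subeq:inputconstr}, because $\BB_{k+1}(0)=\{0\}$.

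\emph{Case $N\ge2$.} Take $N_{k+1}=N-1$ and the feedback-corrected shifted inputs $\tilde v(j)=v_k^*(j+1)+K(\tilde z(j)-z_k^*(j+1))$. These give $\tilde z(j)=z_k^*(j+1)+\AKc^{\,j}e$, and hence
\[
\myvec{\tilde z(i)}{\tilde v(i)}=\myvec{z_k^*(i+1)}{v_k^*(i+1)}+\myvec{I}{K}\AKc^{\,i}e .
\]
Using the subadditivity $\II(a+b)\subseteq \II a\oplus\II b$ of interval-times-vector, $\BB_{k+1}(j)$ splits into the shifted terms $i=1,\dots,j$ of $\BB_k(j+1)$ plus $\sum_{i=0}^{j-1}\II(j-i-1)\myvec{I}{K}\AKc^{\,i}e$. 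It therefore suffices to prove the key inclusion
\[
\AKc^{\,j}e\oplus\sum_{i=0}^{j-1}\II(j-i-1)\myvec{I}{K}\AKc^{\,i}e\subseteq \II(j)\myvec{z_k^*(0)}{v_k^*(0)} .
\]
With this inclusion, $\tilde z(j)\oplus\BB_{k+1}(j)\subseteq z_k^*(j+1)\oplus\BB_k(j+1)$. All state, terminal and input constraints at time $k+1$ are then inherited from those at time $k$; for the input constraint, the same inclusion is applied after multiplying by $K$.

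The key inclusion is the main obstacle. I would prove it by induction on $j$, unrolling $\TT_{\MDK}$. Write $w=\myvec{z_k^*(0)}{v_k^*(0)}$ and let $\Delta w$ stand for any element of $\Mdelta w$, so that $e=\Delta w$. The set $\TT^j_{\MDK}(\Mdelta)$ has center $0$. Its generators are $\AKc^{\,j}$ applied to $\EE(\Delta_S)$, together with, for each $i$, $\AKc^{\,j-i-1}$ applied to the entrywise generators produced at step $i+1$ by $\Delta_K=\Delta_S\left|\myvec{I}{K}\right|$ acting on the box of $\TT^i$. After applying $\bbox{\cdot}$ and multiplying by $w$, the first group bounds $\AKc^{\,j}e$. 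The remaining groups should dominate the terms $\II(j-i-1)\myvec{I}{K}\AKc^{\,i}e$, using $|\Delta w|\le \Delta_S|w|$ and Proposition~\ref{prop:AdecM} to pass absolute values through $\AKc$. I am not certain the indices align exactly. If they do not, the fallback is to build the candidate with the error propagated through the true $A_K$ rather than $\AKc$. This is legitimate because feasibility only requires existence of a candidate, so it may depend on the actual $(A,B)$. The required bound then follows more directly from $\AK^{j}\Mdelta\subseteq\II(j)$ and from the recursion \eqref{eq:setdyn}.
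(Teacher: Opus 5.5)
\paragraph{Verdict: genuine gap at the key inclusion.}
Your architecture is the paper's. Case $N=1$ is identical. In Case $N\ge 2$ your candidate coincides with \eqref{eq:zv_candidate}: since $\tilde z(j)-z_k^*(j+1)=\AKc^{j}e$, your $\tilde v(j)$ equals $v_k^*(j+1)+K\AKc^{j}e$. Your reduction via subadditivity to the key inclusion is exactly the paper's reduction to \eqref{eq:setincl3}. The gap is that this inclusion, which is the entire technical content of the theorem, is not proved, and the route you sketch for it does not close.

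\paragraph{Why the sketched route fails.}
Write $\II(j)=\intsimple{W_j}$ and $M_K=\myvec{I}{K}$. Unrolling $\TT_{\MDK}$ and collapsing generators with Proposition~\ref{prop:AdecM} gives
\[
W_j=|\AKc^{j}|\Delta_S+\sum_{i=0}^{j-1}|\AKc^{j-1-i}|F_{i+1},\qquad F_0=\Delta_S,\quad F_{i+1}=\Delta_K\sum_{h=0}^{i}|\AKc^{i-h}|F_h ,
\]
so each new $\Delta_K$ enters on the \emph{left}. The left-hand side of the key inclusion is contained in $\intsimple{|\AKc^{j}|\Delta_S+\sum_{i=0}^{j-1}W_{j-1-i}|M_K|\,|\AKc^{i}|\,\Delta_S}\,w$. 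There $\Delta_K=\Delta_S|M_K|$ enters on the \emph{right} of $W_{j-1-i}$.

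The two expressions are equal, but only in total. The generator group $|\AKc^{j-1-i}|F_{i+1}$ does not dominate the term $W_{j-1-i}|M_K||\AKc^{i}|\Delta_S$, so your ``remaining groups dominate the terms'' step fails. Already for $j=2$, $i=0$ the group gives $|\AKc|\Delta_K\Delta_S$, while the term requires $(|\AKc|\Delta_K+\Delta_K^2)\Delta_S$. The missing $\Delta_K^2\Delta_S$ is supplied by the $i=1$ group $F_2=(\Delta_K|\AKc|+\Delta_K^2)\Delta_S$, whose own term is only $\Delta_K|\AKc|\Delta_S$.

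\paragraph{The missing idea.}
You need a reversal identity, which is Lemma~\ref{lem:Pj}: $F_j=\Delta_K P_j\Delta_S$, with $P_j$ obeying the mirrored recursion \eqref{eq:Pj}. Equivalently, both recursions enumerate the same $2^j$ words in $\AKc$ and $\Delta_K$, with absolute values taken on maximal runs of $\AKc$. One recursion splits the words by the position of the first $\Delta_K$, the other by the last $\Delta_K$. With this identity the two sides reduce to the same half-width, as in \eqref{eq:lastlhs}--\eqref{eq:lastrhs}. Without it, an induction on $j$ that matches groups to terms one by one cannot close.

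\paragraph{The fallback is not legitimate.}
Let $\tilde z(j)=z_k^*(j+1)+d_j$. Constraint \eqref{subeq:mpc_dynamic} forces $d_{j+1}=\Ac d_j+\Bc\bigl(\tilde v(j)-v_k^*(j+1)\bigr)$. Having $d_{j+1}=A_K d_j$ would therefore require $(A_\Delta+B_\Delta K)d_j$ to lie in the range of $\Bc$. This fails in general, for example in the double integrator of Section~\ref{subsec:toy}, where $\Bc$ has a zero first entry but $\Delta_A$ does not have a zero first row.

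Even granting such a candidate, $\AK^{j}\Mdelta\subseteq\II(j)$ only places the point $A_K^{j}e$ inside $\II(j)w$. It leaves no room for the additional boxes $\II(j-1-i)M_K A_K^{i}e$, so the ``more direct'' bound does not follow either.
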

\emph{Proof.} See Appendix~\ref{app:RF}.\qed

\begin{theorem}\label{thm:convergence}
    Let Assumptions~\ref{assum:K}-\ref{assum:P0} be satisfied. Then the following statements hold:
    \begin{itemize}
        \item[(i)] The closed-loop trajectories $x(k)$ of the uncertain system~\eqref{eq:sys}-\eqref{eq:sys2}, under the control law \eqref{eq:u_RMPC},\eqref{eq:mpc_tube_box}-\eqref{eq:optvsel}, reach the set $\XX_f$ in at most $\lfloor J_0^*/\gamma\rfloor$+1 steps.
        \item[(ii)] The closed-loop system~\eqref{eq:sys}-\eqref{eq:sys2},\eqref{eq:u_RMPC},\eqref{eq:mpc_tube_box}-\eqref{eq:optvsel} is robustly asymptotically stable. 
    \end{itemize}
\end{theorem}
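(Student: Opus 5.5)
The plan is the standard variable-horizon argument: build a shifted candidate solution, show the cost strictly decreases by at least $\gamma$ while $x(k)\notin\XX_f$, and then use invariance of $\XX_f$ under the local feedback. The main tool is a candidate built from the optimal solution at time $k$. Let $N^*_k,\mathbf v^*(k),\mathbf z^*(k)$ be optimal at time $k$ with $N_k^*\ge 2$. At time $k+1$ take $N_{k+1}=N_k^*-1$ and re-anchor the nominal trajectory at the true state: $z_{k+1}(0)=x(k+1)$. Set $v_{k+1}(j)=v^*_k(j+1)+K(z_{k+1}(j)-z^*_k(j+1))$. The deviation $d(j)=z_{k+1}(j)-z^*_k(j+1)$ then evolves as $d(j+1)=\AKc d(j)$ with $d(0)=e(k+1)\in\SS_k(1)$. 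With this choice of $v_{k+1}$, the stage cost argument $v_{k+1}(j)-Kz_{k+1}(j)=v^*_k(j+1)-Kz^*_k(j+1)$ is unchanged. Hence
\[
J_{k+1}\le J^*_k-\gamma-\ell\bigl(v^*_k(0)-Kz^*_k(0)\bigr)\le J_k^*-\gamma .
\]
This step is exactly what Theorem~\ref{thm:RF} establishes, and I would reuse its candidate and its feasibility verification. Feasibility requires $z_{k+1}(j)\oplus\BB_{k+1}(j)\subseteq z^*_k(j+1)\oplus\BB_k(j+1)$, which comes from the linearity of $\BB$ in $(z,v)$ together with the inclusions $\AK\II(j)\subseteq\II(j+1)$-type properties. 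This monotone inclusion is the main obstacle, but it is assumed to be handled in Appendix~\ref{app:RF}.

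For (i), iterate the decrease. Since $J\ge\gamma N\ge 0$, at most $\lfloor J_0^*/\gamma\rfloor$ steps can occur with $N_k^*\ge2$. Once some $N_k^*=1$, the terminal constraint gives $z^*_k(1)\oplus\BB_k(1)\subseteq\XX_f$. Since $x(k+1)=z^*_k(1)+e(k+1)$ and $e(k+1)\in\SS_k(1)\subseteq\BB_k(1)$ by Proposition~\ref{prop:Bk}, we get $x(k+1)\in\XX_f$. This gives the bound $\lfloor J_0^*/\gamma\rfloor+1$. The time-$0$ step is counted carefully here: $N^*_0\le J_0^*/\gamma$, and $N$ decreases by at least one per step.

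For (ii), I would argue that once $x(k)\in\XX_f$, the choice $N_k=1$, $v_k(0)=Kx(k)$ is feasible. It gives $z_k(1)=\AKc x(k)$ and $\BB_k(1)=\II(0)[x;Kx]\subseteq\Mdelta[I;K]x$ when $\II(0)=\bbox{\Mdelta}=\Mdelta$. Then $z_k(1)\oplus\BB_k(1)\subseteq\AK x(k)\subseteq\XX_f$ by \eqref{eq:RPI}, and the state and input constraints follow from \eqref{subeq:stateconstr}--\eqref{subeq:inputconstr}. Its cost is $\gamma$, which is the minimum possible, since $N\ge1$ and $\ell\ge0$. If $\ell$ is positive definite, optimality forces $\ell(v_k(0)-Kx(k))=0$, i.e.\ $v(k)=Kx(k)$ and $u(k)=Kx(k)$. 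Otherwise I would assume the solver selects this solution, or add a tie-break. The closed loop then becomes $x(k+1)=A_Kx(k)$ with $A_K$ Schur for every admissible model (Assumption~\ref{assum:K}).

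The time-invariance of the true model within $\MD$ therefore gives exponential convergence inside $\XX_f$. Combined with finite-time entry from (i), this yields attractivity. For Lyapunov stability near the origin, I would use the fact that $\XX_f$ contains a neighborhood of the origin, so small initial states start in $\XX_f$. The subtle point is uniformity over the uncertain $A_K$. I would handle it with a common quadratic Lyapunov function from the LMI design of $K$, or by compactness of $\AK$ and continuity of the spectral radius, to get uniform exponential bounds.
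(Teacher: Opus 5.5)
Your proposal is correct and is essentially the paper's proof. It uses the same shifted candidate \eqref{eq:zv_candidate} (your deviation $d(j)=\AKc^{j}\delta(k)$), which yields $J^*_{k+1}\le J^*_k-\gamma$; entry into $\XX_f$ follows through \eqref{eq:xkinXf} once $N^*_k=1$, and then the cost-$\gamma$ solution $v=Kx$ together with \eqref{eq:RPI} gives stability. Your caveat that $\ell$ must vanish only at zero (or that a tie-break is needed) for the minimizer to coincide with \eqref{eq:candidate2} is a point the paper glosses over.

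The one slip is the aside that $N$ decreases by at least one per step. This holds for the candidate but not necessarily for the optimal $N^*_k$, so the step count must rest, as your preceding sentence already does, on $2\gamma\le\gamma N^*_k\le J^*_k\le J^*_0-k\gamma$ while the horizons stay $\ge 2$.
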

\emph{Proof.} See Appendix~\ref{app:convergence}.\qed

Set constraints in \eqref{subeq:mpc_state}-\eqref{subeq:mpc_terminal} can be efficiently handled as follows.
The state constraints are given by the polytopic set $\XX = \{x\in\RR^n|\,\,H x\leq b\}$. Since all interval matrices $\II(j)$ in~\eqref{eq:Ibox}  are symmetric (i.e., the center is zero), they can be written as $\II(j) = \intsimple{\Delta_I(j)}$. Then, the $j$-th state constraint~\eqref{subeq:mpc_state}, i.e.,
$$z_k(j) \oplus \sum_{i=0}^{j-1}\II(j-1-i)\myvec{z_k(i)}{v_k(i)}\subseteq \XX,$$ 
can be equivalently expressed as
\begin{equation}\label{eq:constr_efficient}
    H z_k(j) + \left|H \right|\sum_{i=0}^{j-1} \Delta_I(j-i-1)\left|\myvec{z_k(i)}{v_k(i)}\right|\leq b.
\end{equation}
The input and terminal constraints~\eqref{subeq:mpc_input}-\eqref{subeq:mpc_terminal} can be treated similarly (assuming that also $\XX_f$ is a polytopic set).
The constraint~\eqref{eq:constr_efficient} can be easily cast as a linear constraint (see, e.g.,~\cite{boyd2004convex}) and thus it can be efficiently handled online by standard solvers. This is enabled by the use of interval matrices in the optimization problem~\eqref{eq:mpc_tube_box} to bound the system error. 

In view of \eqref{eq:constr_efficient}, problem~\eqref{eq:mpc_tube_box} turns out to be a mixed-integer program with linear constraints. 
For such type of problems, efficient heuristics have been proposed to speedup computations~\cite{leomanni2022variable,persson2024optimization}.

\begin{remark}\label{remark:vertices}
The proposed set-theoretic uncertainty bounding procedure does not require to impose
constraints on the vertices of the 
uncertainty set $\MD$. Actually, the constraints of problem \eqref{eq:mpc_tube_box} do not even
depend on the number of 
uncertain parameters. This leads to a remarkable reduction of the online
computational burden with respect to both 
tube-based approaches and to techniques which explicitly bound uncertainty
propagation along the prediction horizon, as 
\cite{BORRELLI22,chen2024robust}.
Moreover, with respect to the latter techniques, the proposed robust optimal control problem is guaranteed to be
recursively feasible.
\end{remark}

\section{Numerical results}
\label{sec:numerical}

In this section, three numerical examples are presented to assess the performance of the proposed Interval Matrix MPC approach, hereafter denoted as IM-MPC.
The first is a simple double-integrator system, used to illustrate the resulting feasible domain. 
The second example, borrowed from \cite{chen2024robust}, is used to evaluate the conservatism of the proposed control law with respect to the uncertain parameter bounds.
The third example involves a population model as a higher-dimensional benchmark, to demonstrate the computational efficiency of the method. 
The IM-MPC control law is compared to the following robust MPC schemes: 
the {Polytopic Tube-MPC (PT-MPC)} presented in~\cite[Sec.~5.5]{CANNONbook}; 
the Offline Tightening MPC (OT-MPC) approach introduced in~\cite{BORRELLI22}\footnote{For the comparison with~\cite{BORRELLI22}, the publicly available code at: https://github.com/monimoyb/RMPCPy has been used.}; 
the System Level Synthesis MPC (SLS-MPC) proposed in~\cite{chen2024robust}\footnote{For the comparison with~\cite{chen2024robust}, the publicly available code at: https://github.com/ShaoruChen/Polytopic-SLSMPC has been used.}. For IM-MPC, the function $\ell(\cdot)$ in the cost $J_k$ is quadratic, i.e., $J_k = \gamma N_k+\sum_{j=0}^{N_k-1} \|v_k(j)-Kz_k(j)\|_2^2$ with $\gamma = 1$. The variable-horizon problem~\eqref{eq:mpc_tube_box} is solved by enumeration for each horizon length $N_k \in\{ 1,\ldots,N_{\max}\}$.
The simulations are carried out using MATLAB on a standard laptop equipped with an Intel i5 processor (4.6~GHz) and 16~GB of RAM. The code is publicly available at {https://github.com/RenatoQuartullo/Interval-Matrices-MPC}.

\subsection{Feasible domain evaluation}\label{subsec:toy}
Consider a discrete-time uncertain double-integrator system,
in which 
%$x(k) =  \left[x_1(k),\,x_2(k)\right]^T$,
the nominal system matrices are set as
	\begin{equation*}\label{eq:AB_DI}
		\Ac=
		\left[
		\begin{array}{c c}
			1&\quad 1\\
			0&\quad 1 \end{array}\right],
		\quad
        \Bc=\left[
		\begin{array}{lll}
			0\\
			1\end{array}\right]
	\end{equation*}
    while the interval matrix uncertainty $\MD$ is given by
    \begin{equation*}
        \Delta_A = \left[
		\begin{array}{c c}
			0.1&\quad 0.05\\
			0.01&\quad 0.03 \end{array}\right], \quad
        \Delta_B=\left[
		\begin{array}{lll}
			0.05\\
			0.02\end{array}\right].
    \end{equation*}
The state constraint $\mathcal{X}$ is such that $$\myvec{-12}{-4}\leq x(k) \leq\myvec{12}{4},$$ while the input command must satisfy $|u(k)|\leq 2$. The matrix gain $K$ and the terminal set $\XX_f$ satisfying Assumptions~\ref{assum:K}-\ref{assum:RPI}
are calculated following the procedures described in~\cite[Sec.~5.2]{CANNONbook} and~\cite[Sec.~5.4]{CANNONbook}, respectively. The resulting gain matrix is $K=\left[-0.47\,\,-\!1.48\right]$ and the terminal set is $\XX_f = \{x:\,\,||Vx||_{\infty}\leq\alpha\}$, with $$V = \myvec{2.08\quad 2.07}{1.25\quad 2.65},\quad \alpha =\myvec{4.71}{1.48}.$$  
%The cost is chosen as in~\eqref{eq:cost} with $\ell(\cdot)$ being the 1-norm and $\gamma = 1$. 
%The variable-horizon problem~\eqref{eq:mpc_tube_box} is solved by enumeration for each $N_k = 1,\ldots, N_{\max}$, with $N_{\max}=25$.

Figure~\ref{fig:ROA} depicts the feasible domain obtained with the different control laws. 
The feasible domains are computed as the convex hull of the initial conditions for which 
problem~\eqref{eq:mpc_tube_box} is feasible at $k=0$. A total of 300 initial conditions are 
generated by uniformly gridding the admissible state set~$\XX$. 
To ensure a fair comparison, the feasible domains of PT-MPC and SLS-MPC are the computed as the set of all initial states for which the  optimization problem is feasible for at least one horizon length $N \in \{1,\ldots,N_{\max}\}$, with $N_{\max} = 25$.  
% This prevents any bias arising from the fact that the proposed scheme employs a variable prediction horizon, whereas the method in~\cite{CANNONbook} relies on a fixed horizon. 
%In this way, the comparison is not affected by a potentially bad choice of the horizon length. 
For OT-MPC, the maximum horizon length has been set to $N=3$, as longer horizon lengths required excessively high computation times. %\mycomm{Ma dire (come era prima) che gli autori stessi dicono questo?} 
% As reported by the authors, the control scheme cannot be executed reliably for prediction horizons larger than three. 
% In this evaluation, we used their public code\footnote{Available at: https://github.com/monimoyb/RMPCPy} with the setting described in this section.
It can be seen that the proposed IM-MPC yields a significantly larger feasible domain 
than PT-MPC and OT-MPC. 
Notice that the larger feasible domain obtained with the proposed control law is not only a consequence of the variable prediction horizon, since also the feasible domain of PT-MPC is evaluated for all horizon lengths up to $N_{\max}$. 
This confirms that the improvement is due to a less conservative treatment of uncertainty propagation along the prediction horizon. 
On the other hand, the feasible domain of  SLS-MPC is exactly equal to that achieved by the proposed method. This turns out to be equal also to the maximal robust control invariant set, meaning that no larger feasible domain can be achieved~\cite{borrelli2017predictive}.

%\mycomm{This comparable performance, however, this comes with a computational cost significantly higher than that of IM-MPC. In fact, the average runtime required to solve the SLS-MPC optimization problems, with the shortest feasible horizon for each initial condition, is 3.92~s. By contrast, the IM-MPC technique requires only 0.19~s, despite solving a sequence of up to $N_{\max}$ optimization problems to handle the variable-horizon formulation. The average runtime for the fixed-horizon PT-MPC is 0.23~s, thus comparable to that obtained by the IM-MPC.} \mycommRQ{Questo in rosso non si può più dire se si usa optimizer...i tempi vengono piuttosto simili.}

\begin{figure}[h]
    \centering
    \includegraphics[width=0.6\linewidth]{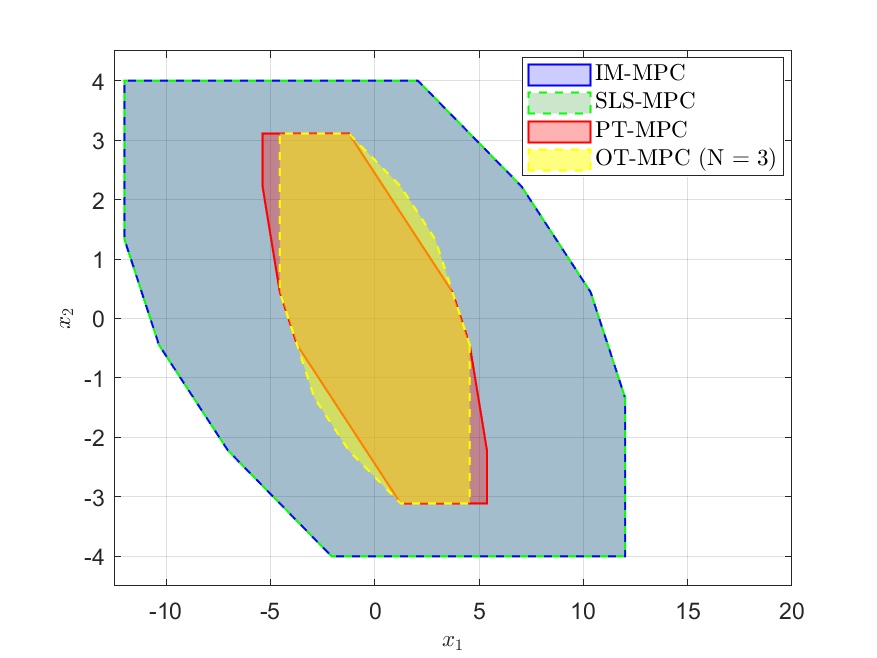}
    \caption{Feasible domain comparison for the double integrator example.}
    \label{fig:ROA}
\end{figure}

\subsection{Conservatism evaluation}\label{subsec:parmar}
In order to assess the degree of conservatism of the proposed uncertainty propagation procedure, let us consider the example presented in \cite[Section 6.2]{chen2024robust}. System \eqref{eq:sys} is defined by the matrices
\begin{equation*}\label{eq:AB_DI}
		A=
		\left[
		\begin{array}{c c}
			a &\quad 0.15\\
			0.1 &\quad 1 \end{array}\right],
		\quad
        B=\left[
		\begin{array}{lll}
			1\\
			b\end{array}\right]
\end{equation*}
with $1-\delta \leq a \leq 1+\delta$ and $1 \leq b \leq 1.2$. The state and input constraints are $|x_i(k)| \leq 8$, $i=1,2$, and  $|u(k)|\leq 4$. For increasing values of $\delta>0$, the maximal robust control invariant set for system \eqref{eq:sys}-\eqref{eq:constraint} is computed and then sampled to generate a grid of initial states. Then, feasibility of the robust optimal control problems of both SLS-MPC and IM-MPC is checked for each initial condition. 
For IM-MPC, the matrix $K$ is computed as in the example in Section \ref{subsec:toy} and we set $N_{\max} = 10$.
%In order to ensure a fair comparison, we consider SLS-MPC feasible if feasibility is obtained for at least one horizon length, among those considered in the variable-horizon problem of IM-MPC (namely, $1 \leq N \leq 20$). \mycomm{vero?}
The coverage rate of the feasible domain, measured as the ratio between the number of feasible initial states and the total number of sampled initial conditions, is reported in Figure \ref{fig:coverage} for all the techniques. Two prediction horizon lengths are considered for the SLS-MPC and PT-MPC schemes, namely $N = 3$ and $N = 10$ (as in~\cite{chen2024robust}), while for the OT-MPC only the case $N = 3$ is viable.  It can be observed that the conservatism degree of SLS-MPC and IM-MPC is essentially the same, and significantly lower than that of the other tested approaches.
%, with a slight advantage of the latter for higher values of $\delta$. 
This is quite remarkable, as in \cite{chen2024robust} it was shown that in this example SLS-MPC outperforms several other tube-based MPC techniques (the interested reader is referred to \cite{chen2024robust} for more details on such comparisons).
\begin{figure}
    \centering
    \includegraphics[width=0.5\linewidth]{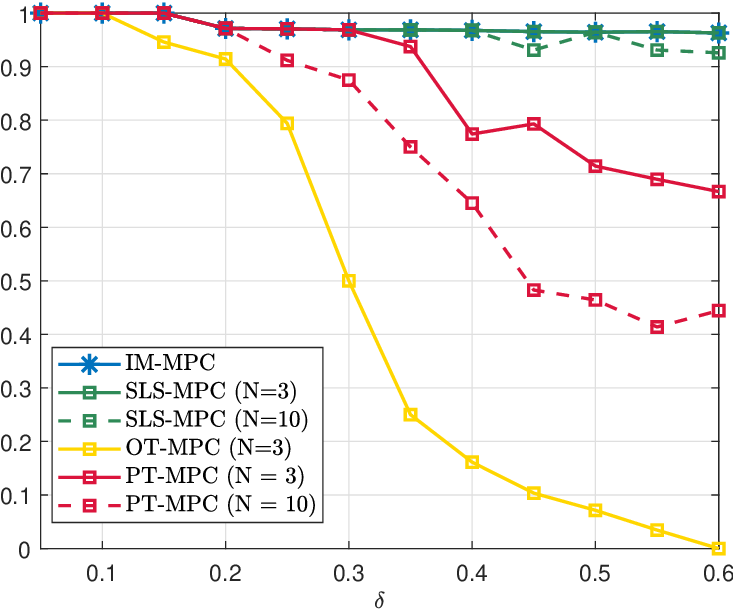}
    \caption{Coverage of the feasible domain for different uncertainty bounds $\delta$.}
    \label{fig:coverage}
\end{figure}

\subsection{Runtime evaluation}
In this case study, an age-structured population with $n = 6$ classes is considered. The population evolves according to the classical Leslie model~\cite{leslie1945use}
\begin{equation}
    x(k+1) = Ax(k)+Bu(k),
\end{equation}
with
\begin{equation}
A = 
\begin{bmatrix}
f_1 & f_2 & \cdots & f_{n-1} & f_n\\
s_1 & 0   & \cdots & 0 & 0   \\
0   & s_2 & \ddots & \vdots & \vdots \\
\vdots & \ddots & \ddots & 0 & 0\\
0 & \cdots & 0 & s_{n-1} & s_n
\end{bmatrix},\quad B = \begin{bmatrix} 0&0&0&0\\b_1&0&0&0\\0&b_2&0&0\\0&0&b_3&0\\0&0&0&b_4\\0&0&0&0
\end{bmatrix}.
\end{equation}
The state vector $x(k) = \left[x_1(k),\,x_2(k),\ldots,x_n(k)\right]^T$ represents the number of individuals in each age class at epoch $k$, while the input vector specifies the number of new individuals in the corresponding age classes. For simplicity, state and input values are intended as variations with respect to a
reference population, which explains 
why negative values may occur.
The coefficients $f_i$ denote the fertility rates of age class $i$, whereas $s_i$, $i = 1,\ldots n-1$ represent the survival rates from age class $i$ to age class $i+1$. The parameter $s_n$ denotes the fraction of individuals that survive in the last age class. In this example, the fertility and survival rates are nominally set to
$f_i = \{0.01,\,0.45,\,0.4,\,0.14,\,0,\,0\}$,  $s_i = \{0.9,\,0.95,\,0.9,\,0.85,\,0.8,\,0.2\}$. These parameters are affected by a relative uncertainty $\delta_f = 8\%$ on parameters $f_i$ and $\delta_s =3\%$ on parameters $s_i$. Also, the parameters $b_j$, $j = 1,\ldots,m$ are uncertain, with nominal value 1 and relative uncertainty $\delta_b = 1\%$.
The input vector must satisfy $ |u(k)| \leq \left[0.25,\,0.3,\,0.2,\,0.1\right]^T$, while all the state components are bounded as $|x_i(k)|\leq 1.5$. These constraints define the sets $\XX$ and $\UU$ in~\eqref{eq:constraint}.
% $$ \begin{bmatrix}
%     0\\0\\0\\0
% \end{bmatrix} \leq u(k) \leq \begin{bmatrix}
%        0.35\\ 0.35\\ 0.3\\ 0.25
% \end{bmatrix}.$$
% The number of new individuals that can be introduced into the population is limited.
% In particular, the input vector must satisfy
% $$ \begin{bmatrix}
%     0\\0\\0\\0
% \end{bmatrix} \leq u(k) \leq \begin{bmatrix}
%        0.35\\ 0.35\\ 0.3\\ 0.25
% \end{bmatrix}.$$
% Moreover, the total population must remain below a prescribed capacity limit, 
% corresponding to the state constraint
% \begin{equation}\label{eq:totalpop}
% \sum_{i=1}^6 x_i(k) \leq 6.1    
% \end{equation}
%The constraints $x_i(k) \geq 0$ are implicitly satisfied, given the model parameters and the input constraints. The control objective is to reach a steady state $x_{ref} = \left[1,\ldots,1\right]^T$ satisfying the input and state constraints defined above, for all possible value of the uncertain parameters $f_i$ and $s_i$. The control input related to the steady state is $u_{ref} = \left[0.1,\,0.05,\,0.1,\,0.15\right]^T$. 

For IM-MPC, the matrix gain $K$ is chosen so that the eigenvalues of $\AKc$ are $\{0.3,0.3,0.3,0.3,0.2,0.2\}$. The resulting  $K$ satisfies Assumption~\ref{assum:K}. The terminal set $\XX_f$ is calculated as in Section~\ref{subsec:toy} and $N_{\max} =10$ is set.
The proposed control law is compared to the PT-MPC and SLS-MPC algorithms with $N=10$, all equipped with the same terminal set $\XX_f$.

To evaluate the computational burden of the three methods, a sequence of simulation runs are carried out. In each run, a progressively larger subset of the 14 non-zero parameters ($f_i$, $s_i$ and $b_j$) are treated as uncertain. This is repeated for 20 initial conditions $-1\leq x_i(0) \leq 1$, for $i = 1,\ldots,6$.
%More specifically, the first run involves a single uncertain parameter, the second run two uncertain parameters, and so on, until all ten non-zero parameters are considered uncertain. 
Figure~\ref{fig:tocs} compares the number of constraints and the average solution time of problem~\eqref{eq:mpc_tube_box} with those of the other techniques. It can be observed that both the number of constraints and the solution time of IM-MPC are essentially unaffected by the number of uncertain parameters. 
{Conversely, these quantities increase exponentially for both PT-MPC and SLS-MPC. 
%This is due to the fact that the number of inequality constraints in the corresponding optimal control problems scales with the number of vertices of the uncertainty set~$\MD$. 
Notice that, despite having to solve a variable-horizon problem, the proposed approach is much less computationally demanding than SLS-MPC, for any number of uncertain parameters.} 
This example confirms that the proposed control law scales efficiently to systems of nontrivial dimension with many uncertain parameters.

% To further assess the robustness of the proposed control law, we carried out a stability–margin test in which the level of parametric uncertainty is progressively increased. In particular, starting from the nominal model (i.e., $\delta_f = \delta_s = 0$), the feasibility of problems~\eqref{eq:mpc_tube_box} and the one in~\cite{CANNONbook}, for $k=0$, are tested for increasing values of $\delta_f$ and $\delta_s$ up to 20\% of the nominal value. Figure~\ref{fig:ROA_population} shows that the proposed controller is initially feasible (and then robustly stable, see Theorem~\ref{thm:convergence})  for larger uncertainty levels than the method in~\cite{CANNONbook}, confirming its superior robustness properties, already observed in the simple example in Section~\ref{subsec:toy}.

% As a last result, we show the closed-loop trajectories obtained with a level of uncertainty $\delta_f = 10\%$ and $\delta_s = 4\%$, for 50 randomly generated pairs of matrices $\interval{A\quad B}\in\MD$, for both the compared methods. It can be seen that state and input constraints are robustly satisfied and that both controllers ensure asymptotic convergence. The trajectories in Figs.~\ref{fig:population_traj} are broadly comparable, although the proposed control law exhibits a faster convergence.

Figure \ref{fig:population_traj} shows the closed-loop trajectories of $x_1(k)$, $x_6(k)$ and $u(k)$, generated by the IM-MPC scheme starting from $x_i(0)=-1$ for all $i$, for 100 different realizations of the uncertain matrices $\interval{A\quad B}\in\MD$.
It can be observed that the trajectories converge to zero while robustly satisfying state and input constraints, according to the theoretical properties derived in Section \ref{sec:intervalMPC}.

%Combined with the reduced conservatism achieved over existing approaches, this enables the treatment of realistic, high-dimensional case studies, rather than the simple toy examples typically considered in the literature. \mycomm{questo paragrafo andrà rivisto alla luce di quello che mettiamo nella sezione..., l'ultima frase comunque la toglierei}}

\begin{figure}[h]
    \centering
    \includegraphics[width=0.49\linewidth]{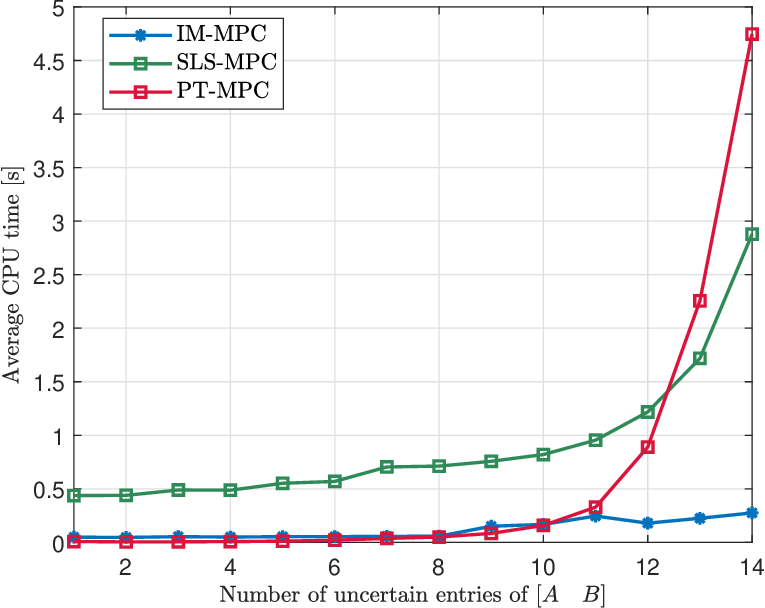}\hfill
    \includegraphics[width=0.49\linewidth]{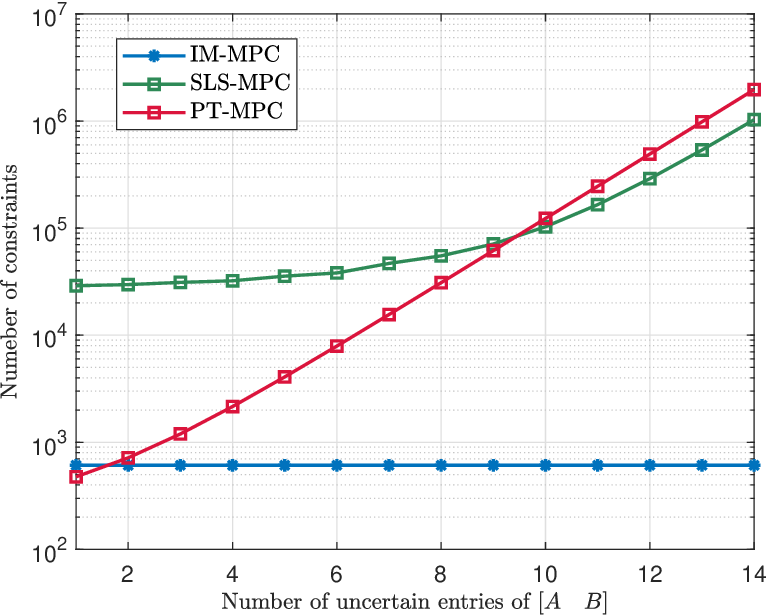}
    \caption{Average computation time required to solve the robust optimal control problem (left) and corresponding number of constraints (right, logarithmic scale) as functions of the number of uncertain entries in the matrix~$\left[A \;\; B\right]$.}
    \label{fig:tocs}
\end{figure}

\begin{figure}[h]
    \centering
    \includegraphics[width=0.7\linewidth]{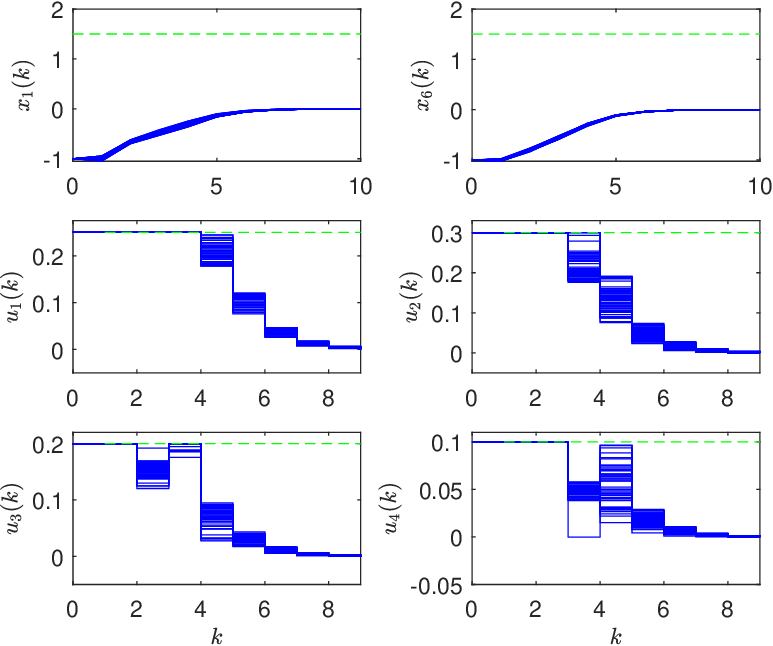}
    \caption{Closed-loop trajectories of states $x_1$ and $x_6$ (top) and control input (bottom), generated by IM-MPC, for different realizations of $\interval{A\;\;B} \in \MD$. Dashed lines represent constraints (green) and reference values (black).}
    \label{fig:population_traj}
\end{figure}

% \begin{figure}[h]
%     \centering
%     \includegraphics[width=0.7\linewidth]{figs/stability_margin_population.eps}
%     \caption{Feasibility of problem~\eqref{eq:mpc_tube_box} for different levels of parametric uncertainty $\delta_f,\delta_s$. \mycomm{cambiare legenda}}
%     \label{fig:ROA_population}
% \end{figure}

\section{Conclusions}
\label{sec:conclusions}

A new robust MPC technique has been presented, for discrete-time linear systems affected by parametric uncertainties. The key advantage of the proposed approach is that it exploits the uncertainty description in terms of interval matrices to devise a set-theoretic approximation of the system impulse response based on matrix zonotopes, which allows one to limit the conservatism of the resulting bound on the uncertainty propagation along the prediction horizon. This is confirmed by numerical simulations showing that the proposed approach matches the conservatism level of a recently developed technique based on system-level synthesis, which was shown to be significantly less conservative with respect to state-of-the-art robust MPC schemes.
Even more importantly, the proposed technique is extremely efficient in terms of online computational burden. In fact, it does not require vertex enumeration neither in the uncertainty model nor in the set approximating the uncertainty evaluation, thanks to the use of a suitable bounding operator exploiting the interval matrix uncertainty model. A variable-horizon optimal control problem is adopted to achieve recursive feasibility and to enable finite-time convergence to a robust control invariant set, thus ensuring robust asymptotic stability of the proposed scheme.
\\
Since uncertainty model based on interval matrices can be easily estimated from data within a set-membership identification setting, the extension of the proposed technique to data-driven robust MPC is the subject of ongoing work.
Future developments may involve the use of the zonotopic bounding procedure in combination with a time-varying nominal control law, as in the mentioned system-level synthesis approach, to further reduce conservatism.

\appendix
\section{Appendix}

\subsection{Auxiliary lemmas}

In order to prove the main results in the paper, the following preliminary lemmas are needed.
\begin{lemma}
\label{lem:Ij}
The set $\II(j)$ defined in \eqref{eq:Ibox} satisfies
\begin{equation}
\label{eq:Ijint}
\II(j)= \intsimple{ \sum_{i=0}^j | \hat{A}_K^{j-i} | F_i }
\end{equation}
where $F_0=\Delta_S$ and
\begin{equation}
\label{eq:Fj}
F_{j+1} = \Delta_K \sum_{i=0}^{j} | \hat{A}_K^{j-i} | F_i, ~~~\mbox{for}~j \geq 0. 
\end{equation}
\end{lemma}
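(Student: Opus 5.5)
The plan is to describe the matrix zonotope $\TT^j_{\MDK}(\II_\Delta)$ explicitly by induction on $j$, and then apply Proposition~\ref{prop:box}. As the initial zonotope we take $\II_\Delta = \intsimple{\Delta_S} = \langle 0;\,\EE(\Delta_S)\rangle$, so its center is zero and its generators are $\Delta_S^{(k)}$. In Definition~\ref{def:operatorT}, with $C=\AKc$ and $\Delta=\Delta_K$, the center is mapped as $M\mapsto \AKc M$. Starting from a zero center, every iterate therefore has zero center, so $\TT^j_{\MDK}(\II_\Delta)$ is symmetric. Each application of $\TT$ multiplies all existing generators on the left by $\AKc$ and appends the entrywise decomposition of $F=\Delta_K\sum|G|$ (the $|M|$ term vanishes).

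The induction hypothesis is that
\[
\TT^j_{\MDK}(\II_\Delta)=\Big\langle 0;\ \AKc^{\,j}\EE(F_0),\ \AKc^{\,j-1}\EE(F_1),\ \ldots,\ \AKc^{\,0}\EE(F_j)\Big\rangle,
\]
with all $F_i\ge 0$. The case $j=0$ holds because $F_0=\Delta_S$. For the inductive step, applying $\TT_{\MDK}$ premultiplies each generator block by $\AKc$, which gives $\AKc^{\,j+1-i}\EE(F_i)$ for $i=0,\ldots,j$. It also appends $\EE(\tilde F)$, where $\tilde F=\Delta_K\sum_{i=0}^j\sum_k|\AKc^{\,j-i}F_i^{(k)}|$. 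Since $F_i\ge0$, Proposition~\ref{prop:AdecM} collapses each inner sum to $|\AKc^{\,j-i}|F_i$, so $\tilde F$ coincides with $F_{j+1}$ in \eqref{eq:Fj}. Nonnegativity of $F_{j+1}$ is immediate from $\Delta_K\ge0$, which keeps the induction going.

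Finally, Proposition~\ref{prop:box} gives $\II(j)=\bbox{\TT^j_{\MDK}(\II_\Delta)}=\intsimple{\sum_{i=0}^j\sum_k|\AKc^{\,j-i}F_i^{(k)}|}$. Applying Proposition~\ref{prop:AdecM} once more yields \eqref{eq:Ijint}.

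The main obstacle is keeping the generator bookkeeping straight: one has to track which power of $\AKc$ multiplies each block $\EE(F_i)$. One also has to note that Proposition~\ref{prop:AdecM} needs the nonnegativity of $F_i$, which is why that property is carried through the induction.
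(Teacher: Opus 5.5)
Your proof is correct and follows the paper's argument. It uses the same induction establishing $\TT^j_{\MDK}(\II_\Delta)=\langle 0;\,\AKc^{\,j}\EE(F_0),\ldots,\EE(F_j)\rangle$ via Definition~\ref{def:operatorT} and Proposition~\ref{prop:AdecM}, followed by Proposition~\ref{prop:box}; you are slightly more careful than the paper in starting at $j=0$, carrying the nonnegativity $F_i\ge 0$ that Proposition~\ref{prop:AdecM} requires, and making explicit its second use in the final bounding-box step.
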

\emph{Proof. } Let us first prove by induction that 
\begin{equation}
\label{eq:TTj}
\TT^j_{\MDK}(\II_{\Delta})=\langle 0; \hat{A}_K^j \EE(F_0),   \hat{A}_K^{j-1} \EE(F_1), \dots,  \EE(F_j) \rangle.
\end{equation}
For $j=1$, by applying Definition \ref{def:operatorT} with $\II=\MDK=\hat{A}_K \oplus \intsimple{\Delta_K}$ and $\MM=\II_{\Delta} = \langle 0;\,\mathcal{E}(\Delta_S)\rangle$ one gets
$$
\TT_{\MDK}(\II_{\Delta})=\langle 0; \hat{A}_K \EE(F_0),  \EE(F_1) \rangle,
$$
with $F_0=\Delta_S$ and $F_1=\Delta_K \Delta_S$, which is in agreement with \eqref{eq:Fj}. Now, let \eqref{eq:TTj} hold for a generic $j$. By applying again Definition \ref{def:operatorT}  one obtains
$$
\TT^{j+1}_{\MDK}(\II_{\Delta}) = \TT_{\MDK} \circ \TT_{\MDK}^j (\II_{\Delta}) = 
\langle 0; \hat{A}_K^{j+1} \EE(F_0),   \hat{A}_K^{j} \EE(F_1), \dots,  \hat{A}_K \EE(F_{j}), \EE(F_{j+1}) \rangle,
$$
where, using \eqref{eq:Fgen} and \eqref{eq:AdecM}, one has
$$
\begin{array}{rcl}
F_{j+1} &=& \displaystyle{ \Delta_K \sum_{i=1}^{n(n+m)} \left(  |\hat{A}_K^j F_0^{(i)}|+  |\hat{A}_K^{j-1} F_1^{(i)}|+ \dots + | F_j^{(i)} | \right) }\\ 
&=& \Delta_K \left( |\hat{A}_K^j| F_0 + |\hat{A}_K^{j-1} | F_1 + \dots +  F_j \right)
%\sum _{i=0}^j  |\hat{A}_K^{j-i}| F_i
\end{array}
$$
which corresponds to  \eqref{eq:Fj}.
Then, \eqref{eq:Ijint} immediately follows by applying Proposition \ref{prop:box} to the right hand side of \eqref{eq:TTj}. \qed
\begin{lemma}
\label{lem:Pj}
Let us define matrices $P_j$ such that $P_1=I$ and
\begin{equation}
\label{eq:Pj}
P_{j+1} = | \hat{A}_K^j | + \sum_{h=1}^{j} P_h \Delta_K | \hat{A}_K^{j-h} |, ~~~\mbox{for}~j \geq 1. 
\end{equation}
Then, matrices $F_j$ in \eqref{eq:Fj} satisfy
\begin{equation}
\label{eq:FjPj}
F_{j} = \Delta_K P_j \Delta_S, ~~~\mbox{for}~j \geq 1. 
\end{equation}
\end{lemma}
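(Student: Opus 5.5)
We argue by strong induction on $j\geq 1$, taking \eqref{eq:Fj} as the definition of $F_j$ and \eqref{eq:Pj} as the definition of $P_j$. Throughout, we use only that $\Delta_K$, $\Delta_S$ and all matrices $|\hat{A}_K^{r}|$ are fixed, entrywise nonnegative matrices. The identity is purely algebraic (linear in $\Delta_S$), so no set-theoretic arguments are needed.

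\emph{Base case.} For $j=1$, \eqref{eq:Fj} with $j=0$ gives $F_1=\Delta_K|\hat{A}_K^0|F_0=\Delta_K\Delta_S$. This equals $\Delta_K P_1\Delta_S$ since $P_1=I$.

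\emph{Inductive step.} Assume $F_h=\Delta_K P_h\Delta_S$ for all $1\leq h\leq j$. In the sum in \eqref{eq:Fj} defining $F_{j+1}$, we isolate the term $i=0$, which involves $F_0=\Delta_S$, and substitute the inductive hypothesis into the terms $i=h\geq 1$:
\begin{equation*}
F_{j+1}=\Delta_K\Big(|\hat{A}_K^{j}|\Delta_S+\sum_{h=1}^{j}|\hat{A}_K^{j-h}|\,\Delta_K P_h\Delta_S\Big)
=\Delta_K\Big(|\hat{A}_K^{j}|+\sum_{h=1}^{j}|\hat{A}_K^{j-h}|\,\Delta_K P_h\Delta_S\Big)\Delta_S .
\end{equation*}
Here the trailing $\Delta_S$ is factored out on the right. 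The bracket must now be identified with $P_{j+1}$.

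\emph{Main obstacle.} The bracket produced above has the factors in the order $|\hat{A}_K^{j-h}|\Delta_K P_h$. By contrast, \eqref{eq:Pj} is written as $P_h\Delta_K|\hat{A}_K^{j-h}|$. These matrices do not commute in general. I therefore plan to prove, by a separate induction, that $P_j$ also satisfies the mirrored recursion
\begin{equation*}
P_{j+1}=|\hat{A}_K^{j}|+\sum_{h=1}^{j}|\hat{A}_K^{j-h}|\,\Delta_K P_h .
\end{equation*}
The first step is to unroll \eqref{eq:Pj} into a closed form. Every $P_{j+1}$ is a sum over ordered compositions $(r_0,\dots,r_s)$ of products
\begin{equation*}
|\hat{A}_K^{r_0}|\,\Delta_K\,|\hat{A}_K^{r_1}|\,\Delta_K\cdots\Delta_K\,|\hat{A}_K^{r_s}|,
\end{equation*}
with $r_0+\dots+r_s+s=j$ and $r_i\geq 0$. (The interpretation of $P_1=I$ as $|\hat{A}_K^0|$ fits this form, and the leading term $|\hat{A}_K^j|$ is the case $s=0$.) This representation is symmetric under reversing the order of the factors. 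Splitting it by the \emph{last} $\Delta_K$ yields \eqref{eq:Pj}, while splitting it by the \emph{first} $\Delta_K$ yields the mirrored recursion. I would verify the closed form by induction from \eqref{eq:Pj}, and then read off the mirrored recursion by grouping the terms according to $r_0$.

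Combining the mirrored recursion with the displayed expression for $F_{j+1}$ gives $F_{j+1}=\Delta_K P_{j+1}\Delta_S$, which closes the induction. The trailing $\Delta_S$ inside the bracket will need care: in the expression for $F_{j+1}$, the factor $\Delta_S$ multiplies only from the right. So the bracket should read $|\hat{A}_K^{j}|+\sum_h|\hat{A}_K^{j-h}|\Delta_K P_h$ with a single $\Delta_S$ factored out overall. I will write the factorization carefully at that point.
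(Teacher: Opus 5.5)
Your proof is correct and rests on the same idea as the paper's: $P_{j}$ (and also $F_{j}$, up to the outer factors $\Delta_K$ and $\Delta_S$) equals the sum over all $2^{j-1}$ words in which blocks $|\hat{A}_K^{r}|$ are separated by factors $\Delta_K$. The paper expands both \eqref{eq:Fj} and \eqref{eq:Pj} into this closed form, whereas you expand only \eqref{eq:Pj} and use the closed form to get the mirrored recursion (splitting at the first rather than the last $\Delta_K$), which makes explicit the non-commutativity point the paper leaves implicit. Fix the slip in your displayed factorization of $F_{j+1}$, where a stray $\Delta_S$ stays inside the bracket after you factor it out on the right; as you noted yourself, the bracket should be $|\hat{A}_K^{j}|+\sum_{h=1}^{j}|\hat{A}_K^{j-h}|\Delta_K P_h$.
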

\emph{Proof. } By recursively expanding the sum in \eqref{eq:Fj}, one gets that $F_j=\Delta_K \Sigma_{j-1} \Delta_S$, where $\Sigma_j$ is the sum of the $2^j$ terms of the form $|M_1^{q_1}||M_2^{q_2}| \dots |M_h^{q_h}|$,
for $1 \leq h \leq j$,
with $M_i \in \{\hat{A}_K,\Delta_K\}$, $M_i \neq M_{i+1}$,
and $1 \leq q_i \leq j$, $\sum_{i=1}^h q_i =j$.
Similarly, by recursively expanding the sum in \eqref{eq:Pj}, one gets $P_j=\Sigma_{j-1}$, thus giving \eqref{eq:FjPj}.
%\mycomm{la proof precedente era sbagliata, questa dovrebbe essere corretta ma con la cosa un po' antiestetica che scriviamo $|\Delta_K^{q_i}|$ anche se il valore assoluto non servirebbe... ma serve per gli altri termini con $\hat{A}_K^{q_i}$!}\qed 

\subsection{Proof of Theorem~\ref{thm:RF}}\label{app:RF}
For proving recursive feasibility of the optimal control problem~\eqref{eq:mpc_tube_box}, we demonstrate the existence of a feasible candidate solution at time $k+1$. To do this, we need to distinguish the cases in which the horizon length of the optimal solution at time $k$ is $N_k^* > 1$ (Case 1) or $N_k^* = 1$ (Case 2).
%Let $\bar{k}$ be the first time instant in which $N_{\bar{k}}^* = 1$. Then we will show that $N_k^* = 1$ for all $k>\bar{k}$.

\textbf{Case 1 ($N_k^* >1$):}
Let us consider the following candidate solution for step $k+1$ 
\begin{equation}\label{eq:zv_candidate}
    \begin{array}{l}
        \hat{v}_{k+1}(j) =
            v^*_k(j+1) + K\AKc^j\delta(k), \quad j = 0,\ldots,N^*_k-2,\\
        \hat{z}_{k+1}(j) =
            z^*_k(j+1) + \AKc^j\delta(k), \quad j = 0,\ldots,N^*_k-1
    \end{array}
\end{equation}
where
\begin{equation}
\label{eq:deltak}
\delta(k) = x(k+1)-\hat{A} x(k) - \hat{B} u(k) = A_{\Delta} x(k) + B_{\Delta} u(k)
\end{equation}
and $z_k^*(j)$, $v_k^*(j)$ are the optimal nominal states and inputs for problem~\eqref{eq:mpc_tube_box}. Note that the length of this solution is $N_k^*-1$.
It is worth stressing that $\delta(k)$ is a known vector at time $k+1$ (when the state $x(k+1)$ is available), and therefore also the candidate solution in \eqref{eq:zv_candidate} is known. Moreover, one has that
\begin{equation}
\label{eq:deltakset}
\delta(k)
 \in \Mdelta\left[\begin{array}{c}
         x(k)\\
         u(k)
    \end{array}\right] 
    = \intsimple{\Delta_S}  \myvec{z_k^*(0)}{v_k^*(0)}.
\end{equation}
Let us show that \eqref{eq:zv_candidate} satisfies all the constraints of problem~\eqref{eq:mpc_tube_box}.

\emph{Initial constraint.} The initial state of the candidate solution~\eqref{eq:zv_candidate} satisfies
\begin{equation*}
    \begin{split}
        \hat{z}_{k+1}(0) & = z^*_k(1)+\delta(k)\\ 
        &= \Ac z^*_k(0)+\Bc v^*_k(0)+A_{\Delta}x(k) + B_{\Delta}u(k)\\
        & = (\Ac + A_{\Delta})x(k) + (\Bc + B_{\Delta})u(k) = x(k+1)
    \end{split}
\end{equation*}
which corresponds to constraint \eqref{subeq:mpc_init} at time $k+1$.

\emph{Dynamic constraint.} By using~\eqref{eq:zv_candidate}, we have that
\begin{equation*}
    \begin{split}
        &\hat{z}_{k+1}(j+1)-\AKc^{j+1}\delta(k) = z^*_k(j+2)\\ 
        & = \Ac z_k^*(j+1) + \Bc v_k^*(j+1)\\
        &= \Ac\left(\hat{z}_{k+1}(j)-\AKc^j\delta(k)\right)+\Bc\left(\hat{v}_{k+1}(j)-K\AKc^j\delta(k)\right) \\
        &= \Ac\hat{z}_{k+1}(j)+\Bc \hat{v}_{k+1}(j) - \AKc^{j+1}\delta(k).
    \end{split}
\end{equation*}
Hence,  $\hat{z}_{k+1}(j+1) = \Ac \hat{z}_{k+1}(j)+\Bc \hat{v}_{k+1}(j)$ implying satisfaction of constraint~\eqref{subeq:mpc_dynamic}.

\emph{State constraint.}
Let us define 
\begin{equation}
\label{eq:BBkstar}
\BB_k^*(j) = \sum_{i=0}^{j-1} \II(j-i-1) \xi_k^*(i)
\end{equation}
and
\begin{equation}
\hat\BB_{k+1}(j) = \sum_{i=0}^{j-1} \II(j-i-1) \hat\xi_{k+1}(i)
\end{equation}
where
$$
\xi_k^*(i) = \myvec{z_k^*(i)}{v_k^*(i)}\,,~~~
\hat\xi_{k+1}(i) = \myvec{\hat{z}_{k+1}(i)}{\hat{v}_{k+1}(i)}.
$$
In order to prove that the candidate solution \eqref{eq:zv_candidate} satisfies constraint~\eqref{subeq:mpc_state}, it is sufficient to show that
\begin{equation}
\label{eq:setincl}
\hat{z}_{k+1}(j) \oplus \hat\BB_{k+1}(j) \subseteq z_k^*(j+1) \oplus \BB_k^*(j+1) 
\end{equation}
for $j=0,1,\dots,N_k^*-1$, because the right hand side of \eqref{eq:setincl} is included in $\XX$. 
By using the second equation in  \eqref{eq:zv_candidate}, one has that \eqref{eq:setincl} is equivalent to
\begin{equation}
\label{eq:setincl2}
\hat{A}_{K}^j \delta(k) \oplus \hat\BB_{k+1}(j) \subseteq \BB_k^*(j+1). 
\end{equation}
Exploiting \eqref{eq:zv_candidate}, the left hand side of \eqref{eq:setincl2} can be rewritten as \begin{equation*}
    \begin{split}
	& \hat{A}_{K}^j \delta(k) \oplus \hat\BB_{k+1}(j) \\
	& = \hat{A}_{K}^j \delta(k) \oplus \sum_{i=0}^{j-1} \II(j-i-1) \hat\xi_{k+1}(i) \\
	& = \hat{A}_{K}^j \delta(k) \oplus \sum_{i=0}^{j-1} \II(j-i-1) \xi_{k}^*(i+1) \oplus \sum_{i=0}^{j-1} \II(j-i-1) M_K \hat{A}_K^i \delta(k) 
   \end{split}
\end{equation*}
where $M_K=\myvec{I}{K}$.
Therefore, by using \eqref{eq:BBkstar}, \eqref{eq:setincl2} boils down to
\begin{equation}	
\label{eq:setincl3}
\hat{A}_{K}^j \delta(k) \oplus \sum_{i=0}^{j-1} \II(j-i-1) M_K \hat{A}_K^i \delta(k)  \subseteq \II(j) \xi_k^*(0). 
\end{equation}
Now, using Lemmas \ref{lem:Ij} and \ref {lem:Pj} and recalling that $\Delta_K = \Delta_S |M_K|$, the left hand side of \eqref{eq:setincl3} satisfies 
\begin{align}
& \hat{A}_{K}^j \delta(k) \oplus \sum_{i=0}^{j-1} \II(j-i-1) M_K \hat{A}_K^i \delta(k)  \notag \\
& \overset{\eqref{eq:Ijint}}{=} \hat{A}_{K}^j \delta(k) \oplus \sum_{i=0}^{j-1} \intsimple{ \sum_{h=0}^{j-1-i} | \hat{A}_K^{j-1-i-h} | F_h } M_K \hat{A}_K^i \delta(k)   \notag\\
& \overset{\eqref{eq:FjPj}}{=} \hat{A}_{K}^j \delta(k) \oplus \sum_{i=0}^{j-1} \intsimple{ | \hat{A}_K^{j-1-i} | \Delta_S + \sum_{h=1}^{j-1-i} | \hat{A}_K^{j-1-i-h} | \Delta_K P_h \Delta_S } M_K \hat{A}_K^i \delta(k)  \notag \\
& \overset{\eqref{eq:intxM}}{\subseteq} \hat{A}_{K}^j \delta(k) \oplus \sum_{i=0}^{j-1} \intsimple{ \left\{ | \hat{A}_K^{j-1-i} | \Delta_S + \sum_{h=1}^{j-1-i} | \hat{A}_K^{j-1-i-h} | \Delta_K P_h \Delta_S \right\}\, |M_K| \, |\hat{A}_K^i | } \delta(k)  \notag\\
& =  \hat{A}_{K}^j \delta(k) \oplus \sum_{i=0}^{j-1} \intsimple{ | \hat{A}_K^{j-1-i} | \Delta_K \, |\hat{A}_K^i | } \delta(k)  \oplus \sum_{i=0}^{j-1} \intsimple{ \sum_{h=1}^{j-1-i} | \hat{A}_K^{j-1-i-h} | \Delta_K P_h \Delta_K  \, |\hat{A}_K^i | } \delta(k) \notag \\
&= \hat{A}_{K}^j \delta(k) \oplus \sum_{i=0}^{j-1} \intsimple{ | \hat{A}_K^{j-1-i} | \Delta_K \, |\hat{A}_K^i | }  \delta(k) \oplus  \sum_{h=1}^{j-1}  \sum_{i=0}^{j-1-h} \intsimple{ | \hat{A}_K^{j-1-i-h} | \Delta_K P_h \Delta_K  \, |\hat{A}_K^i | } \delta(k)  \label{eq:lastlhs}
\end{align}
where the last two equalities exploit property \eqref{eq:intsum} and a reorganization of the sum indexes.
Similarly, the right hand side of \eqref{eq:setincl3} satisfies
\begin{align}
& \II(j) \xi_k^*(0) =  \intsimple{ \sum_{i=0}^{j} | \hat{A}_K^{j-i} | F_i }  \xi_k^*(0) \notag \\
& \overset{\eqref{eq:FjPj}}{=}  \intsimple{| \hat{A}_K^{j} | \Delta_S + \sum_{i=1}^{j} | \hat{A}_K^{j-i} |  \Delta_K P_i \Delta_S}  \xi_k^*(0)  \notag \\
& \overset{\eqref{eq:Pj}}{=}  \intsimple{| \hat{A}_K^{j} | \Delta_S + \sum_{i=1}^{j} | \hat{A}_K^{j-i} |  \Delta_K  | \hat{A}_K^{i-1} | \Delta_S  + \sum_{i=1}^{j} | \hat{A}_K^{j-i} | \Delta_K  \sum_{h=1}^{i-1} P_h \Delta_K | \hat{A}_K^{i-1-h} |  \Delta_S }  \xi_k^*(0)  \notag\\
& =  \intsimple{| \hat{A}_K^{j} | \Delta_S + \sum_{i=0}^{j-1} | \hat{A}_K^{j-1-i} |  \Delta_K  | \hat{A}_K^{i} | \Delta_S  + \sum_{i=0}^{j-1} | \hat{A}_K^{j-1-i} | \Delta_K  \sum_{h=1}^{i} P_h \Delta_K | \hat{A}_K^{i-h} |  \Delta_S }  \xi_k^*(0)  \notag \\
& =  \intsimple{| \hat{A}_K^{j} | \Delta_S + \sum_{i=0}^{j-1} | \hat{A}_K^{j-1-i} |  \Delta_K  | \hat{A}_K^{i} | \Delta_S  + \sum_{h=1}^{j-1}  \sum_{i=h}^{j-1} | \hat{A}_K^{j-1-i} | \Delta_K  P_h \Delta_K | \hat{A}_K^{i-h} |  \Delta_S }  \xi_k^*(0)  \notag \\
& = \intsimple{| \hat{A}_K^{j} | \Delta_S + \sum_{i=0}^{j-1} | \hat{A}_K^{j-1-i} |  \Delta_K  | \hat{A}_K^{i} | \Delta_S  + \sum_{h=1}^{j-1}  \sum_{i=0}^{j-1-h} | \hat{A}_K^{j-1-i-h} | \Delta_K  P_h \Delta_K | \hat{A}_K^{i} |  \Delta_S }  \xi_k^*(0).   \label{eq:lastrhs}
\end{align}
Now, recall that from \eqref{eq:deltakset} one has $\delta(k) \in \intsimple{\Delta_S}\xi_k^*(0)$. Therefore, by comparing \eqref{eq:lastlhs} and \eqref{eq:lastrhs}, and using properties \eqref{eq:intprod} and \eqref{eq:intxMleft}, one gets that \eqref{eq:setincl3} holds, and therefore also \eqref{eq:setincl} and \eqref{subeq:mpc_state}.

\emph{Input constraint.} The proof that solution~\eqref{eq:zv_candidate} satisfies the input constraint~\eqref{subeq:mpc_input} follows the same arguments used for the state constraint \eqref{subeq:mpc_state}.

\emph{Terminal constraint.} Feasibility of the terminal constraint~\eqref{subeq:mpc_terminal} directly stems from \eqref{eq:setincl}. In particular,
$$
\hat{z}_{k+1}(N_k^*-1) \oplus \hat{\BB}_{k+1}(N_k^*-1) \subseteq z_k^*(N_k^*)\oplus \BB^*_k(N_k^*) \subseteq \XX_f.
$$

\textbf{Case 2 ($N_k^* = 1$):}
For time instant $k+1$, let us consider the candidate solution
\begin{equation}\label{eq:candidate2}
\begin{array}{l}
     \hat{v}_{k+1}(0) = Kx(k+1),  \\
     \hat{\textbf{z}}_{k+1} = \left(x(k+1),\,\AKc x(k+1)\right). 
\end{array}
\end{equation}
with associated horizon length $\hat{N}_{k+1} = N_k^* =1$.
Initial and dynamic constraints~\eqref{subeq:mpc_init}-\eqref{subeq:mpc_dynamic} are trivially satisfied by the candidate solution~\eqref{eq:candidate2}.
Now, let us show that $x(k+1)\in \XX_f$. From optimality of problem \eqref{eq:mpc_tube_box} (whose optimal horizon length is $N_k^*=1$), we know that $z_k^*(1) \oplus \BB_k^*(1) \subseteq \XX_f$. 
Therefore
\begin{equation}
\label{eq:xkinXf}
    \begin{split}
        x(k+1) &= A x(k) + B u(k) = \Ac x(k) + \Bc u(k) + \delta(k) \\
               & = z^*_{k}(1) + \delta(k) \in z^*_{k}(1) \oplus \Mdelta\myvec{z^*_k(0)}{v^*_k(0)}\\
               & = z_k^*(1) \oplus \BB_k^*(1) \subseteq \XX_f.
    \end{split}
\end{equation}
Hence, satisfaction of state constraint \eqref{subeq:mpc_state} stems from condition~\eqref{subeq:stateconstr}. 
Now, recalling that $\BB_{k+1}(0) = \{0\}$ and condition~\eqref{subeq:inputconstr}, one has
$$
\hat{v}_{k+1}(0) = Kx(k+1) \in K\XX_f \subseteq \UU.
$$
Therefore, the candidate solution~\eqref{eq:candidate2} is compliant with the input constraint~\eqref{subeq:mpc_input}. Concerning the terminal constraint, by using condition~\eqref{eq:RPI}, we have that
\begin{equation*}
\begin{split}
    \hat{z}_{k+1}(1)\oplus \hat{\BB}_{k+1}(1) & = \AKc x(k+1) \oplus \Mdelta\myvec{x(k+1)}{Kx(k+1)} \\
    & = \AKc x(k+1) \oplus \Mdelta \myvec{I}{K}x(k+1) \\
    & = \left( \AKc \oplus \Mdelta\myvec{I}{K}\right)x(k+1)\\
    & = \AK x(k+1) \subseteq \AK\XX_f \subseteq \XX_f.
\end{split}
\end{equation*}
Hence, the terminal constraint~\eqref{subeq:mpc_terminal} at step $k+1$ is verified. This concludes the proof. \qed

\subsection{Proof of Theorem~\ref{thm:convergence}}\label{app:convergence}

As in the proof of Theorem~\ref{thm:RF}, we treat separately the cases $N_k^* > 1$ and $N_k^* = 1$.

\textbf{Case 1 ($N_k^* > 1$):} The cost $\hat{J}_{k+1}$ associated to the candidate solution~\eqref{eq:zv_candidate} satisfies 
\begin{equation*}
    \begin{split}
        \hat{J}_{k+1} & = \gamma\hat{N}_{k+1}+\sum_{j=0}^{\hat{N}_{k+1}-1} \ell\left(\hat{v}_{k+1}(j)-K\hat{z}_{k+1}(j)\right)\\
        & = \gamma (N_k^*-1) + \sum_{j=0}^{N^*_k-2} \ell\left({v}^*_{k}(j+1)-K{z}^*_{k}(j+1)\right)\\
        & =\gamma N_k^*-\gamma + \sum_{j=1}^{N^*_k-1} \ell\left({v}^*_{k}(j)-K{z}^*_{k}(j)\right)\\
        & = J^*_k -\gamma - \ell\left({v}^*_{k}(0)-K{z}^*_{k}(0)\right)\leq J^*_k - \gamma.
    \end{split}
\end{equation*}
Hence, the optimal cost at time $k+1$ satisfies
\begin{equation}\label{eq:cost_decrease}
    J_{k+1}^* \leq \hat{J}_{k+1}\leq J_k^* - \gamma.
\end{equation}
This inequality ensures the existence of a time instant $\bar{k} \leq \lfloor J_0^*/\gamma\rfloor$ for which the optimal horizon length satisfies $N_{\bar{k}}^*=1$, because the minimum attainable cost is $J_k = \gamma$. Then, noting that $x(\bar{k}+1) \in \XX_f$, as shown in~\eqref{eq:xkinXf}, item (i) of Theorem~\ref{thm:convergence} follows.

\textbf{Case 2 ($N_k^* =1$):} The cost associated with the candidate solution~\eqref{eq:candidate2} is trivially $\hat{J}_{k+1} = \gamma$, which implies that such a solution is also optimal. Consequently, $J^*_k = \gamma$ for all $k > \bar{k}$. This condition indicates that, once the optimal horizon length equals one (i.e., at step $\bar{k}$), the minimizer of problems~\eqref{eq:mpc_tube_box} for all $k > \bar{k}$ coincides with the candidate solution~\eqref{eq:candidate2}.
Then, as a direct consequence, since the uncertain system~\eqref{eq:sys}-\eqref{eq:sys2} governed by the control law $u(k) = Kx(k)$ is robustly asymptotically stable for all $\left[A\quad B\right] \in \MD$ (see Assumption~\ref{assum:K}), the closed-loop system~\eqref{eq:sys}-\eqref{eq:sys2} with the MPC law \eqref{eq:u_RMPC},\eqref{eq:mpc_tube_box}-\eqref{eq:optvsel} is also robustly asymptotically stable. \qed

\bibliographystyle{ieeetr}
\bibliography{biblioVH}

\end{document}